\newtheorem{theorem}{Theorem}[section]
\newtheorem{proposition}[theorem]{Proposition}
\newtheorem{corollary}[theorem]{Corollary}
\newtheorem{example}[theorem]{Example}
\journal{Journal of Computational Physics}
\begin{document}

\begin{frontmatter}

  \title{Improved convergence of scattering calculations in the
    oscillator representation.}

\author[ua,bitp]{Y.\ Bidasyuk} \ead{Yuriy.Bidasyuk@ua.ac.be}
\author[ua]{W.\ Vanroose} \ead{Wim.Vanroose@ua.ac.be}
\address[ua]{Departement Wiskunde-Informatica, Universiteit Antwerpen,
  Antwerpen, Belgium} \address[bitp]{Bogolyubov Institute for
  Theoretical Physics, Kyiv, Ukraine}
\begin{abstract}
  The Schr\"odinger equation for two and tree-body problems is solved
  for scattering states in a hybrid representation where solutions are
  expanded in the eigenstates of the harmonic oscillator in the
  interaction region and on a finite difference grid in the near-- and
  far--field.  The two representations are coupled through a
  high--order asymptotic formula that takes into account the function
  values and the third derivative in the classical turning points.
  For various examples the convergence is analyzed for various physics
  problems that use an expansion in a large number of oscillator
  states.  The results show significant improvement over the JM-ECS
  method [Bidasyuk et al, Phys. Rev. C \textbf{82}, 064603 (2010)].
\end{abstract}

\begin{keyword} 
quantum scattering, oscillator representation, Schr\"odinger equation, absorbing boundary conditions, asymptotic analysis
\end{keyword}

\end{frontmatter}

\section{Introduction} 
The accurate prediction of the breakup of a many-particle system into
multiple fragments is one of the most challenging problems in quantum
mechanics.  Not only the relative motion of the particles needs to be
modeled, but also the internal structure of the target and the
products need to be described accurately.  This leads in many cases,
to a high--dimensional Schr\"odinger equation posed on a huge domain.
For example, in breakup or collisions of nuclear clusters the cross
section depends on a delicate interplay of the forces that hold the
clusters together and the forces between the clusters.

The quantum state that describes the internal structure of a bound
many particle system is often represented as linear combination of
eigenstates of the quantum mechanical oscillator. These states form an
$L^2$-basis which reduces the problem to finding the correct expansion
coefficients.  Examples are the correlated many electron state of a
quantum dot
\cite{PhysRevB.47.2244,sako2003confined,PhysRevB.80.045321} and the
many nucleon state in nuclear physics
\cite{filippov1980use,Vasilevsky2001a}.  The various applications of
the oscillator representation are discussed in
\cite{dineykhan1995oscillator}. Such a representation is very
efficient for tightly bound ground state or lowest excited states as
any smooth potential well is close to parabolic shape near it's
bottom. As a result, if the oscillator parameter is optimized to match
this local parabolic potential, the lowest states of a system can be
efficiently represented with only a few oscillator functions \cite{navratil}.

However, the representation is inefficient to describe scattering or
break-up processes. Scattering states are not square integrable and
many oscillator states are required to represent the interaction and
asymptotic region.  Furthermore, many potential matrix elements need
to be calculated and this results in a dense linear system that has a
complexity of $N^3$ to arrive at the solution, where $N$ is the
number of oscillator states used in the representation, omitting the
cost of calculating the potential matrix elements.

The $J$-matrix method offers a way of calculating cross sections and
other scattering observables in the oscillator representation. It was
proposed by Heller and Yamani in the seventies \cite{heller1974new,heller} and
mainly applied to atomic problems.  The method exploits the tridiagonal
structure of the kinetic energy operator.  The $J$-matrix has
been under constant development since its inception and a review of
the recent developments can be found in \cite{jmatrix}.  For problems
with Coulomb interactions the Coulomb-Sturmian basis is preferred over
the oscillator representation. Recently the Coulomb-Sturmians have
been used to describe multiphoton single and double-ionization
\cite{foumouo2006theory} and electron impact ionization
\cite{PhysRevA.82.022708}.

At the same time, and mostly independent, the Algebraic Resonating Group Method
was developed for nuclear scattering problems
\cite{filippov1980use,filippov1981,Arickx1994,okhrimenko1984allowance,
PhysRevA.55.265}.
This method exploits the same principles as the $J$-matrix method for
description of nuclear cluster systems, where the oscillator representation is
efficiently used to describe the internal structure of clusters. If the same
representation is used for intercluster degrees of freedom, then
the nucleon symmetrization rules become straightforward.

One shortcoming of the methods based on an $L^2$ basis is that the
asymptotic solutions need to be known explicitly before a system for
the wave function in the inner region and the scattering observables
can be written down.  This is a serious limitation since it is hard to
find the asymptotic wave function for breakup reactions with multiple
fragments.  The Complex Scaling method, which scales the full domain
into the complex plane, can be easily implemented in the oscillator
representation by taking a complex valued oscillator strength. This
has been used to calculate energy spectrum or extract the resonances
\cite{csoto}, but the calculation of cross sections and other
scattering observables may be quite difficult.

In recent decades significant progress has been made in the numerical
solution of scattering processes described by the Helmholtz equation.  In
contrast to many particle systems, where the potential $V$ is often
non-local, the wave number $k(x)$ in the Helmholtz equation depends
only on the local material parameters such as the speed of sound in
acoustics or the electric permittivity and magnetic permeability for
electromagnetic scattering. For these problems
grid based representations such as finite difference, finite
element \cite{NME:NME1620380303,Babuska1995325}, or Discontinuous Galerkin
\cite{Farhat20016455,Farhat20031389} are preferred since
they lead to sparse matrices that can be easily solved by
preconditioned Krylov subspace methods \cite{vandervorst, erlangga2006comparison}.
 
Another technique that has found widespread application is the use of absorbing
boundary conditions. These boundary conditions allow a scattering
calculation without prior knowledge about the asymptotic wave form.
Exterior Complex Scaling (ECS) and Complex Scaling (CS) are widely
used in atomic and molecular physics
\cite{McCurdy2004,moiseyev1998quantum,rescigno1999collisional,vanroose2005complete}.  Perfectly matched layers
(PML) are used for electromagnetic and acoustic scattering
\cite{berenger1994perfectly}, which can also be interpreted as a complex
stretching transformation \cite{chew}.  There are many other excellent
absorbing boundary conditions \cite{givoli,antoine, nissen2010perfectly}.

In \cite{PhysRevC.82.064603} the JM-ECS method was introduced that
combines the $J$-Matrix method with a grid based ECS.  The method
describes the scattering solution in the interior region with an
oscillator representation and in the exterior region with finite differences.
The two representations are matched through a low order asymptotic
formula with an error that scales as $\mathcal{O}(N^{-1/2})$, where
$N$ is the size of the oscillator basis describing the inner region.
Once the grid and oscillator representation are matched, it is easy
to introduce an absorbing boundary layer since the grid representation
can be easily extended with an ECS absorbing layer or any other
absorbing boundary condition.

The resulting method was illustrated for one-- and
two--dimensional model problems representative for real scattering
problems with local interactions.  Furthermore, the representation was
used for nuclear $p$-shell scattering. However, the accuracy of the
calculations was unsatisfactory due the low--order matching condition.
While the grid representation on the exterior has an accuracy of
$\mathcal{O}(N^{-1})$, the matching is only accurate to order
$\mathcal{O}(N^{-1/2})$.

The main contribution of the current paper is to increase the accuracy
of the asymptotic formula that allows a better matching of the grid
and oscillator representations.  The better asymptotic formula takes
function values and its third derivative into account. It can bring
the matching error down to the level of the accuracy of the grid
representation.
  
A higher--order asymptotic approximation of the oscillator
representation was already discussed by S. Igashov in
\cite{igashov_jmatrix}.  However, the formula was not used to
increase the accuracy of scattering calculations.

This paper is outlined as followed.  In Section
\ref{sec:reviewscattering} we shortly describe the process of scattering
calculations in the oscillator representation.  In Section
\ref{sec:higherorder} we derive a higher--order asymptotic formula that
takes into account the behavior of the function in the
classical turning points in coordinate and Fourier space.  In Section
4 and 5 we use this asymptotic formula to solve scattering
problems.

\section{Review of scattering calculations in the oscillator representation}
\label{sec:reviewscattering} In this section we discuss the most
important properties of the oscillator representation and how they can
be used to perform scattering calculations with the $J$-matrix method.
We also recall the working of the hybrid $J$-matrix and ECS method
proposed in \cite{PhysRevC.82.064603}.
\subsection{The radial scattering equation} 
The aim is to solve the Schr\"odinger equation in atomic units ($m=1$,
$\hbar=1$) that describes a scattering process of two particles for an energy $E\in \mathbb{R}$.  This
equation written in relative coordinates is
\begin{equation}\label{eq:scattering3d}
\left[-\frac{1}{2}\Delta + V(\mathbf{r}) - E \right] \Psi(\mathbf{r}) = F(\mathbf{r}),  \quad \forall \mathbf{r} \in \mathbb{R}^3
\end{equation}
where $F(\mathbf{r})$ is the function describing the initial state or
the source term and $V(\mathbf{r})$ is the potential.  The coordinate
$\mathbf{r}$ can be written in spherical coordinates $(\rho, \theta,
\phi)$. In case of spherically symmetric potential
($V(\mathbf{r})=V(\rho)$), Eq.~\eqref{eq:scattering3d} can be
reduced to a one-dimensional radial equation using partial wave
decomposition
$$
\Psi(\mathbf{r})=\Psi(\rho, \theta, \phi) = \sum_{l,m} \frac{\psi_l(\rho)}{\rho} Y_{l,m}(\theta, \phi), \quad 
F(\mathbf{r}) = F(\rho, \theta, \phi) = \sum_{l,m} \frac{f_l(\rho)}{\rho} Y_{l,m}(\theta, \phi),
$$
where $Y_{l,m}(\theta, \phi)$ is a spherical harmonic, $l=0,1,2,
\ldots$ is the orbital angular momentum of the relative motion, $m$ is
the projection of this angular momentum.  The resulting reduced radial
equation becomes
\begin{equation}\label{eq:scattering}
\left[-\frac{1}{2}  \frac{d^2}{d \rho^2} + \frac{l(l+1)}{2 \rho^2} + V(\rho) - E \right] \psi_l(\rho) =
f_l(\rho).
\end{equation}

For the problem we are interested in there is a range $a > 0 $ such
that for all $\rho>a$ both $V(\rho)$ and $\chi(\rho)$ are zero.

We solve the Eq.~\eqref{eq:scattering} for $E>0$ and extract from the
solution $\psi_l$ scattering observables such as the cross sections or
phase shift. In order to solve the Eq.~\eqref{eq:scattering} we represent
the solution as
\begin{equation} \label{eq:expansion}
\psi_l(\rho) = \sum_{n=0}^{\infty} c_{n,l} \varphi_{n,l}(\rho),
\end{equation}
where $\varphi_{n,l}(\rho)$ are orthogonal $L^2$ functions, in particular we
will use reduced oscillator functions, whose properties will be explained in
the next section.  After projection of~\eqref{eq:expansion} on $\varphi_{n,l}$, Eq.~\eqref{eq:scattering} results
in an infinite linear system
\begin{equation}\label{eq:discretescattering}
  \sum_{n=0}^\infty \left(T^{(l)}_{kn} + V^{(l)}_{kn} - E \right)c_{n,l} =b_{k,l},
\end{equation} where  $T_{kn}^{(l)}$ denotes the elements of the kinetic energy matrix and $V_{kn}$ the elements of the potential energy matrix.  They are the integrals 
\[T^{(l)}_{kn} = \int_0^\infty \varphi_{k,l}(\rho) \left(-\frac{1}{2} 
\frac{d^2}{d \rho^2} + \frac{l(l+1)}{2\rho^2} \right) \varphi_{n,l}(\rho) \; d\rho,\]
\[V^{(l)}_{kn} = \int_0^\infty \varphi_{k,l}(\rho) V(\rho)\varphi_{n,l}(\rho)
\; d\rho
\]
and $b_{k,l} = \int_0^\infty \varphi_{k,l}(\rho) \chi_l(\rho) \; d\rho$. As the considered problem is effectively one-dimensional we will use $x$ instead of $\rho$ to denote radial relative coordinate in all following sections devoted to two-body problem.

\subsection{The oscillator representation} Before we explain the
strategy to solve Eq.~\eqref{eq:discretescattering}, we repeat
the main properties of the oscillator representation and the function
$\varphi_{n,l}$ that will be used in the expansion.

The reduced radial equation analogous to~\eqref{eq:scattering} for the quantum harmonic oscillator with an
oscillator strength $\omega$ is
\begin{equation}\label{eq:oscil_rad}
 \left[-\frac{1}{2} \frac{d^2}{d x^2} +\frac{1}{2}\frac{l(l+1)}{x^2} + \frac{1}{2}\omega^2 x^2
\right] \varphi_{n,l}(x) = E_{n,l} \varphi_{n,l}(x)
\end{equation} with $x \in [0, \infty[$ a radial coordinate. The
boundary conditions are $\varphi_{n,l}(0)=0$ and $\lim_{x\rightarrow \infty}
\varphi_{n,l}(x) = 0$. The eigenvalues are
\begin{equation} \label{eq:osc_spectrum}
E_{n,l} = \left(2 n + l + \frac{3}{2}\right)\omega,
\end{equation}
and eigenstates
\begin{equation}\label{eq:oscillatorfunction}
 \varphi_{n,l}(x) = (-1)^n N_{n,l} b^{-1/2} \left(\frac x b\right)^{l+1}
\exp\left(-\frac{x^2}{2b^2}\right) L_n^{l+1/2}\left(\frac{x^2}{b^2}\right),
\end{equation}
where $L_n^{l+1/2}$ are Laguerre polynomials. The normalization is
$N_{n,l} = \sqrt{2n!/\Gamma(n+l+3/2)}$, where $n \in \{0,1,\ldots,\}$
and oscillator length is defined as $b=\sqrt{1/\omega}$.

The classical turning point associated with each state is $R_{n,l} = b
\sqrt{4n + 2l + 3}$, defined as the point where the potential energy
equals the total energy of the system. 

The functions (\ref{eq:oscillatorfunction}) form a complete
orthonormal basis and any wave function $\psi_l(x)$ that behaves as $x^l$ in $x=0$
can be represented using the infinite sum:
\begin{equation} 
  \psi_l(x) = \sum_{n=0}^{\infty} c_{n,l}
  \varphi_{n,l}(x), \quad \text{where} \quad c_{n,l} = \int_0^\infty
  \varphi_{n,l}(x) \psi_l(x) dx.
 \label{eq:osc_rep}
\end{equation} 
In the following section we will use that the radial oscillator
equation, \eqref{eq:oscil_rad}, can be rewritten in terms of $b$, the
oscillator length, and $R_{n,l}$, the classical turning point as
\begin{equation}\label{eq:oscillator}
  \left[-\frac12 \frac{d^2}{dx^2}
    + \frac{l(l+1)}{x^2}  + \frac{x^2}{2 b^4}- \frac{R_{n,l}^2}{2 b^4}\right]
  \varphi_{n,l}(x)=0.
\end{equation}

An important property that forms the basis of the results in this
paper is that the $n$-th oscillator state has $n$ oscillations between
the origin and the classical turning point
$R_{n,l}=\sqrt{4n+2l+3}$. Beyond this turning point the function is
exponentially decaying  without additional oscillations. This
means that as $n$ increases, the frequency of the oscillation between
the origin and $R_{n,l}$ grows proportional to $\sqrt{n}$. This property
will be used to derive the asymptotic formula in Section \ref{sec:higherorder}.
\begin{figure}
  \begin{center}
	\includegraphics[width=0.5\linewidth]{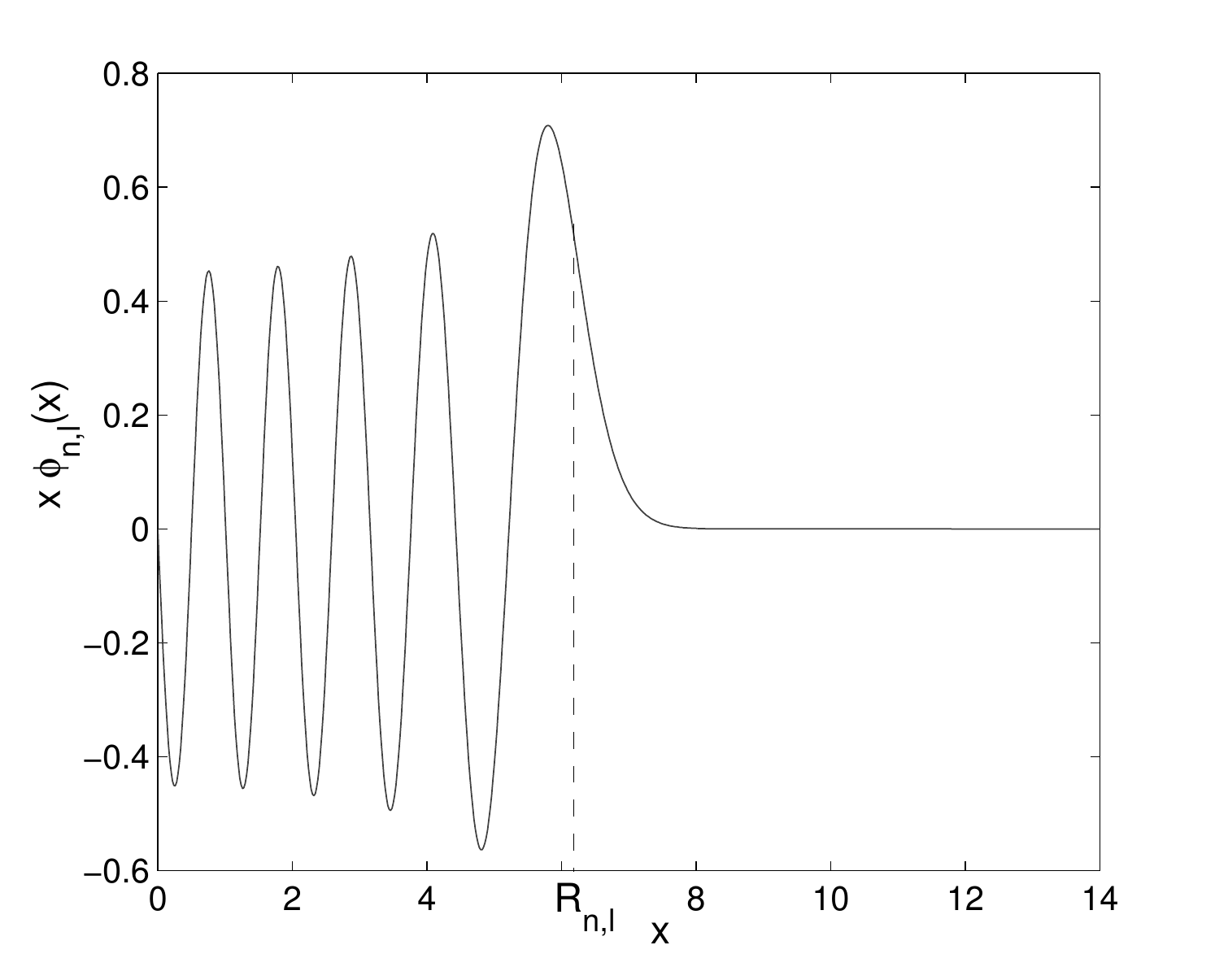}\includegraphics[
width=0.5\linewidth]{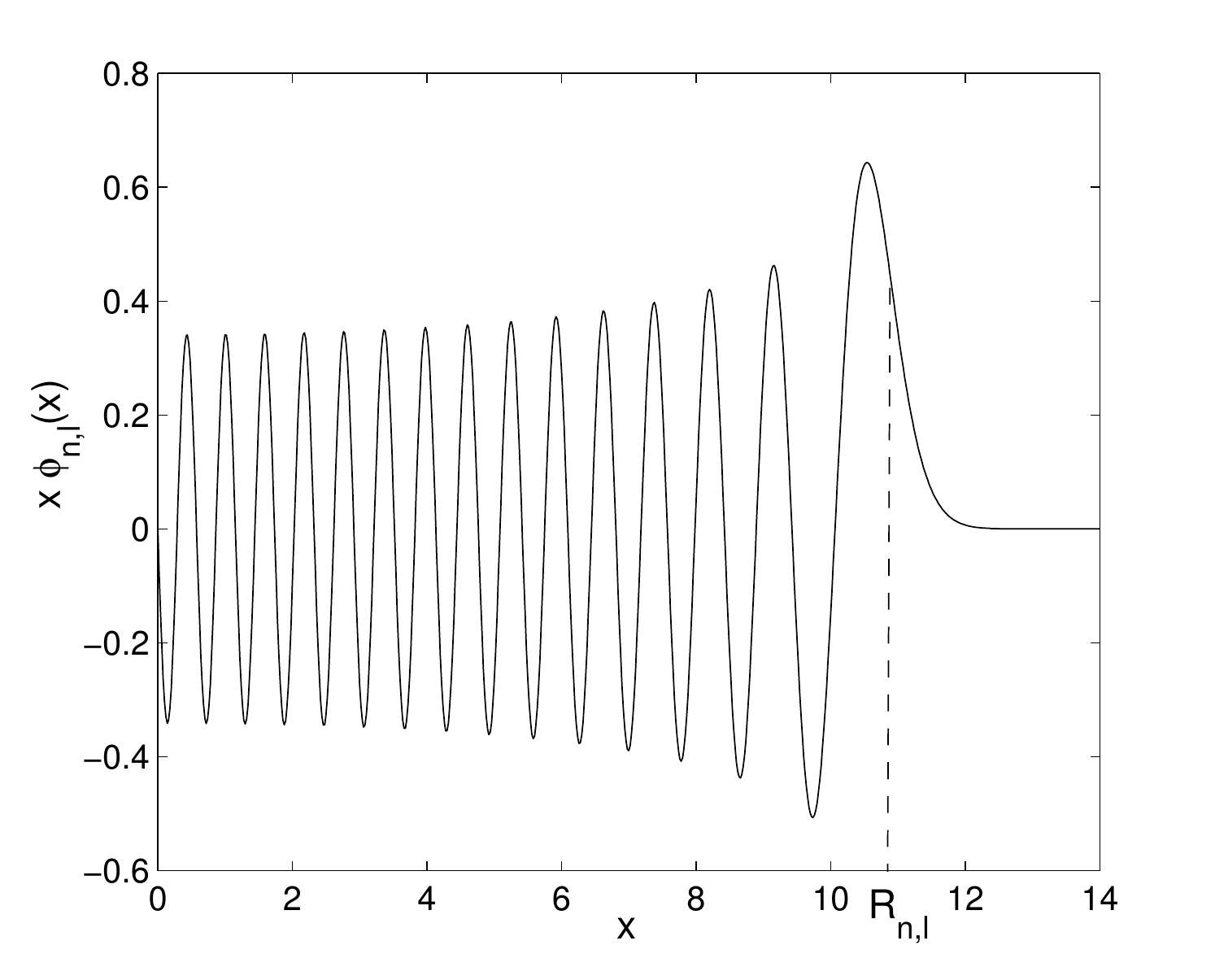}
      \end{center}
      \caption{The reduced oscillator state $\varphi_{n,l}(x)$ with a classical turning point $R_{n,l}$ for
        $l=0$ and two values of $n$: $n=10$ (left) and $n=30$
        (right). The function oscillates $n$ times on the interval
        $[0,R_{n,l}]$. Since $R_{n,l}$ only grows with
        $\mathcal{O}(\sqrt{n})$ the state becomes rapidly oscillating as
        $n$ grows.}
	\label{fig:oscil}
\end{figure}

The Bessel transform $\tilde{F_l}(k)$ of a function $F_l(x)$ is defined
as
\begin{equation} \tilde F_l(k) := \sqrt{\frac2{\pi}}
\int_0^\infty F_l(x) \hat j_l(kx) dx.
\end{equation} 
where $\hat j_l(kx)$ is a Riccati--Bessel function, the regular
solution of the radial Schr\"odinger equation without potential,
\eqref{eq:scattering}. It is connected with the ordinary Bessel function
in a following way $\hat j_l(x) = \sqrt{\pi x/2}\,J_{l+1/2}(x)$.
The Bessel transform of an oscillator state is again an
oscillator state
\begin{equation} \tilde \varphi_{n,l}(k) = (-1)^n b\,
\varphi_{n,l}(kb^2),
\end{equation} 
with a classical turning point $K_{n,l} = R_{n,}/b^2 = \sqrt{4n+2l+3}/b$. This
relation that can easily be derived using {\em e.g.}  formula
(7.421.4) from \cite{gradshtein}. The oscillator states therefore form
an orthonormal basis of $L^2([0,\infty[)$ that diagonalizes the Bessel
transform. Similar properties hold for the Cartesian oscillator state
based on Hermite polynomials, see, for example,
\cite{PhysRevB.80.045321}.

An other important property that will be used in the remainder of the
text is the following:
\begin{proposition}\label{lem:interchange}
  Let $\psi_l$ a function that behaves as $x^l$ in $x=0$. The projection on the oscillator
  state $\varphi_{n,l}$ can be calculated with either $\psi_l$ or its
  Bessel transform $\tilde{\psi}_l$ as:
\begin{equation} 
  c_{n,l} = \int_0^\infty \varphi_{n,l}(x) \psi_l(x) dx =
  (-1)^n b \int_0^\infty \varphi_{n,l}(x) \tilde{\psi}_l(x) dx.
\end{equation}
\end{proposition}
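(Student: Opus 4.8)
The plan is to recognize the claimed identity as a specialization of the Plancherel (Parseval) relation for the Bessel transform, combined with the self-reciprocity of the oscillator states recorded just above. Conceptually, the projection $c_{n,l}=\langle\varphi_{n,l},\psi_l\rangle$ is an inner product in $L^2([0,\infty[)$, and because the Bessel transform is a unitary involution this inner product is unchanged when both factors are transformed; since the transform of $\varphi_{n,l}$ is again, up to a sign and a rescaling, an oscillator state, the transformed pairing can be rewritten as a pairing of $\varphi_{n,l}$ against $\tilde\psi_l$, which is exactly the right-hand side.

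I would carry this out in three steps. \emph{First}, I would establish that the Bessel transform built from the Riccati--Bessel kernel is a unitary involution on $L^2([0,\infty[)$: writing $F_l(x)=\sqrt{x}\,g(x)$ absorbs the factor $\sqrt{x}$ into the measure and reduces the transform to the Hankel transform of order $l+1/2$, whose self-inverse and isometry properties are classical. This gives both $\widetilde{\tilde F_l}=F_l$ and the Parseval identity $\int_0^\infty F_l(x)G_l(x)\,dx=\int_0^\infty\tilde F_l(k)\tilde G_l(k)\,dk$; the self-reciprocity formula $\tilde\varphi_{n,l}(k)=(-1)^n b\,\varphi_{n,l}(kb^2)$ quoted above is consistent with this, since it preserves the $L^2$ norm after the substitution $x=kb^2$. \emph{Second}, I would apply Parseval to the pair $(\varphi_{n,l},\psi_l)$, obtaining $c_{n,l}=\int_0^\infty\tilde\varphi_{n,l}(k)\tilde\psi_l(k)\,dk$. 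Concretely one can organize the same computation by inserting the inversion formula for $\psi_l$ and interchanging the order of integration by Fubini. \emph{Third}, I would substitute the self-reciprocity formula, so that $c_{n,l}=(-1)^n b\int_0^\infty\varphi_{n,l}(kb^2)\,\tilde\psi_l(k)\,dk$, after which a rescaling of the integration variable recasts the integral in the claimed form with the constant $(-1)^n b$ in front of $\int_0^\infty\varphi_{n,l}(x)\tilde\psi_l(x)\,dx$; the precise power of $b$ is the one point where the prefactor of the transform and the Jacobian of the substitution must be tracked carefully.

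The main obstacle is that $\psi_l$ is a scattering state and therefore not square integrable, so Parseval cannot be invoked verbatim and $\tilde\psi_l$ must be interpreted with care, as a tempered distribution or through a principal value. The saving feature is the Gaussian decay of $\varphi_{n,l}$: the pairing $\int_0^\infty\varphi_{n,l}(x)\psi_l(x)\,dx$ converges absolutely for any at most polynomially growing $\psi_l$ that behaves like $x^l$ at the origin, so both sides are well defined as continuous functionals of $\psi_l$. I would therefore make the argument rigorous by regularization and density: approximate $\psi_l$ by $L^2$ functions, for instance through a Gaussian or sharp cutoff at large $x$, apply the unitary identity there, and pass to the limit, using the rapid decay of $\varphi_{n,l}$ and $\tilde\varphi_{n,l}$ to justify both the interchange of integration order and the limit by dominated convergence. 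Verifying that the boundary behavior $\psi_l\sim x^l$ keeps the Bessel transform regular at the origin is the remaining technical point.
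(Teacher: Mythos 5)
Your proposal follows essentially the same route as the paper's own proof: invoke Parseval's theorem for the Bessel transform to write $c_{n,l}=\int_0^\infty\tilde\varphi_{n,l}(k)\tilde\psi_l(k)\,dk$, then use the self-reciprocity $\tilde\varphi_{n,l}(k)=(-1)^n b\,\varphi_{n,l}(kb^2)$ and a change of variables to recover the claimed form. Your additional care about unitarity of the transform, the tracking of $b$-powers in the rescaling, and the regularization needed for non-$L^2$ scattering states are sound refinements of points the paper treats informally (``after substitution of variables and up to a constant''), but they do not change the underlying argument.
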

\begin{proof} Since Parseval's theorem holds we can calculate $c_n$
either with $\varphi_{n,l}$ or its Bessel transform
\begin{equation} 
  c_{n,l} = \int_0^\infty \varphi_{n,l}(x) \psi_l(x) dx
  = \int_0^\infty \tilde{\varphi}_{n,l}(k) \tilde{\psi}_l(k) dk.
\end{equation}
But since $\tilde{\varphi}_{n,l}(k)= (-1)^n b \varphi_{n,l}(k b^2)$,
the oscillator state in the latter integral is the same as in the
first integral, after substitution of variables and up to a
constant.\end{proof}
This result will be used in the next section to
derive the asymptotic formula for $c_{n,l}$ with large $n$.  This
symmetry between $\psi_l(x)$ and its Bessel transform $\tilde{\psi_l}(k)$
should
also hold in the asymptotic formula.

The kinetic energy operator $T_{i,j}^l$ is a tri-diagonal matrix. For the
oscillator basis the non-zero elements are
\begin{equation}
  T^l_{i,j} = \int_0^\infty \varphi_{i,l}(x)
  \left(-\frac{1}{2}\frac{d^2}{dx^2} + \frac{l(l+1)}{2x^2}\right)
  \varphi_{j,l}(x) dx 
  = \left\{\!\!\begin{array}{c@{\quad}l}
      \left(2i+l+\frac32 \right)\frac{\omega}{2} & \text{for } j=i,\\
      -\sqrt{i\,\left(i+l+\frac12\right)}\frac{\omega}{2} & \text{for }
      j=i-1,\\ -\sqrt{(i+1)\left(i+l+\frac32 \right)}\frac{\omega}{2}&
      \text{for } j=i+1.
\end{array}\right.
\label{eq:kinetic_en}
\end{equation}

For the remainder of the text we will consider only the case of zero angular
momentum and drop $l$ from
the notation. All results, however, are valid for arbitrary $l$.

\subsection{The asymptotic formula} In \cite{Vanroose2001} an
asymptotic formula was derived for the projection of a smooth radial function
$\psi$ on the state $\varphi_n$. The derivation uses stationary phase
arguments to exploit the increase of oscillations as $n\rightarrow
\infty$. We repeat here the derivation of the results.

There are two main contributions to the value of the integral
\begin{equation} 
  c_n = \int_0^\infty \varphi_n(x)\psi(x) dx \approx I_0 + I_{R_n}.
\end{equation}
There is a first contribution, denoted $I_0$, of the left integration
boundary near zero. There is a second contribution, denoted
$I_{R_n}$, from the point of stationary phase, which is the classical
turning point where the oscillations stop (see Figure \ref{fig:oscil}).
Here we provide only general steps of calculating this integral in the
asymptotic region. More details can be found in \cite{Vanroose2001}.

The contribution from the classical turning point can be calculated by
approximating the
oscillator state $\varphi_n$ near the classical turning point by an
Airy function as
\begin{equation}
  \varphi_n(x) \approx   \frac{2}{b}\left(\frac{b^4}{2R_n}\right)^{1/6}
\mathrm{Ai}\left[
    \left(\frac{2R_n}{b}\right)^{1/3}(x-R_n)\right] \quad \text{if} \quad
  |x-R_n| \ll 1.
\end{equation} 
The integral representation of this Airy function and the stationary
phase approximation leads to the contribution of the turning point to
the integral
\begin{equation} 
  I_{R_n} \approx b\sqrt{\frac{2}{R_n}} \psi(R_n).
\end{equation} 
The contribution of the left integration boundary,
$I_0$, can be derived by approximating the oscillator state near the origin by
a Riccati--Bessel function 
\begin{equation}
 \varphi_n(x) \approx (-1)^n \frac{\sqrt{2}}{b}\sqrt{\frac{2}{\pi K_n}}
\hat j_0(K_n x),
\end{equation}
where $K_n$ is the classical turning point of the oscillator state in
the momentum space. Then the contribution from the origin becomes
\begin{equation}
  I_0 \approx (-1)^n\frac{1}{b}\sqrt{\frac{2}{K_n}}
  \tilde{\psi}(K_n).
\end{equation}
And the resulting asymptotic approximation of the oscillator coefficients
becomes
\begin{equation}\label{eq:prl_firstorder}
c_n \approx (-1)^n \frac{1}{b} \sqrt{\frac{2}{K_n}} \tilde{\psi}(K_n) + b
\sqrt{\frac{2}{R_n}} \psi(R_n) \quad \text{if} \quad
  n \gg 1.
\end{equation} 
This relation has a contribution from the turning points in the
coordinate space, $R_n$ and the Fourier space, $K_n$.  Note that it
satisfies the symmetry observed in \ref{lem:interchange} above.


Note that Eq.~\eqref{eq:prl_firstorder} as well as the similar equation
in \cite{Vanroose2001} does not provide the order of the
approximation.  This is one of the shortcomings that will be addressed
in the current paper.
%

\subsection{Scattering calculations in the oscillator representation}
We now discuss how a finite linear system can be obtained that solves
for the wave function in the interaction region and the phase shift,
describing the solution in the asymptotic region.  The presentation
here is based on the asymptotic formula and differs from how the
method was derived historically.  For more detail about the formulation of the original
$J$-matrix method we refer to \cite{heller} and \cite{Arickx1994}.

Let $\psi(x)$ be a smooth two-body radial scattering state with a bounded
energy, depending on one spatial coordinate (can be always reduced
using center-of-mass relative coordinates and spherical coordinates).
Since it is a scattering state, the function does not go to zero as
$x\rightarrow \infty$. However, its Bessel transform $\tilde{\psi}(k)$
goes to zero as $k \rightarrow \infty$.  This means that, as $n$
grows, the only contribution to the expansion coefficient $c_n$ comes
from the classical turning point in coordinate space. So, for a
scattering state $\psi$ with total energy $E=k^2/2$ the solution is 
asymptotically $\hat{j}_l(kr) + \tan (\delta_l)
\hat{n}_l(kr)$ where $\delta_l$ is the phase shift~\cite{Tay83} holds that
\begin{equation}
  c_n \approx b \sqrt{\frac{2}{R_n}} \psi(R_n) =  
b\sqrt{\frac{2}{R_n}}\left(\hat{j}_l(kR_n) + \tan (\delta_l) 
\hat{n}_l(kR_n)\right),
  \label{eq:as_coeff}
\end{equation} where $\hat{j}_l$ and $\hat{n}_l$ are the regular and
irregular solutions of the free-particle equation.  The $c_n$ becomes
the representation of $\psi$ on the grid of classical turning points
$R_n$.

The aim of a one-dimensional radial scattering calculation is to find a
numerical approximation to $\tan (\delta_l)$ for a given potential $V$.  This
can be achieved by
writing the solution in the oscillator representation as
\begin{equation}\label{eq:asymptotic} 
c_{n,l} = \begin{cases}
c^0_{n,l} + j_{n,l} + \tan (\delta_l) \, n_{n,l} & n<N \\ j_{n,l} + \tan(\delta_l)\,
n_{n,l} & n \geq N,
  \end{cases}
\end{equation} where $j_{n,l}$ is the regular solution of the
homogeneous three-term recurrence relation
\begin{equation} 
T_{n,n-1}j_{n-1,l} + (T_{n,n}-E)j_{n,l} + T_{n,n+1}j_{n+1,l} = 0,
\end{equation} where as $n \gg 1$ holds that $j_{n,l}
=b\sqrt{2/R_n}\,\hat{j}_l(kR_n)$.  The $n_{n,l}$ is the irregular
solution of the recurrence relation that goes as $n_{n,l} =
b\sqrt{2/R_n}\,\hat{n}_l(kR_n)$ when $n$ becomes large.

With the form \eqref{eq:asymptotic} we reduce the infinite linear
system to a finite dimensional problem with a set of $N+1$ unknowns
$\{c^0_{i,l}, \tan \delta_l \}$. Once solved we simultaneously obtain the wave
function of the system, $\{ c^0_{i,l}\}$, and the scattering
information, $\tan (\delta_l)$.

  A detailed description of the construction of this linear system,
known as the $J$-matrix, can be found in \cite{Arickx1994}.

\subsection{The hybrid $J$-matrix and ECS method} In
\cite{PhysRevC.82.064603} the asymptotic formula
\eqref{eq:prl_firstorder} was used to introduce a hybrid oscillator
and grid representation for the Schr\"odinger equation that is useful
for scattering calculations where an asymptotic form such
as~\eqref{eq:as_coeff} is not explicitly known. Such a situation
appears in breakup reactions of three or more particles.

Let $\psi(x)$ be again a smooth one-dimensional radial scattering state
with a bounded energy such that
\begin{equation} 
c_n \approx b \sqrt{\frac{2}{R_n}} \psi(R_n).
  \label{eq:matching}
\end{equation}
holds. The $c_n$ becomes the representation of $\psi$ on the grid of
classical turning points $R_n$. The grid distance between these points
becomes smaller when $n$ increases since
\begin{equation}
 h_n = R_{n}-R_{n-1} \approx b^2\frac2{R_n}.
	\label{eq:vgridsize}
\end{equation} 
However, the asymptotic $\psi$ does not necessarily need to be
represented on the grid of turning points $R_n$. Another option is, for example,
a regular grid of equally spaced points. 

The hybrid JM-ECS method represents the one-dimensional radial wave function
as a vector ${\sf \Psi}$ in $\mathbb{C}^{N+K}$, where
\begin{equation} \label{eq:hybridvector} {\sf \Psi} = (c_0,c_1, \ldots,
  c_{N-1},\psi(R_{N}),\psi(R_{N}\! +\! h),\ldots,\psi(R_{N}\! +\!
  (K\!-\!1)h) ).
\end{equation}
The first $N$ elements represent the wave function in the oscillator
representation. While the remaining $K$ elements represent $\psi$ on
an equidistant grid that starts at $R_N$, the $N$-th classical turning
point, and runs up to $R_N + (K-1)h$ with a grid distance $h$ equal to
the difference of the last two turning points of the oscillator
representation $h = R_{N}-R_{N-1}$.  It is assumed that the matching
point that connects the oscillator to finite--difference
representation corresponds to a large index $N$ such that the
asymptotic formula, \eqref{eq:matching}, applies.

Again, the kinetic energy operator in this hybrid representation is
tridiagonal since in both finite--difference and oscillator
representations it is tridiagonal. One should only be careful in 
matching both representations. To obtain the kinetic energy in the
last point of the oscillator representation, the tridiagonal kinetic
energy formula~\eqref{eq:kinetic_en} is used. It involves a recurrence
relation connecting the three terms $c_{N-2}$, $c_{N-1}$ and
$c_{N}$. The latter, the coefficient $c_N$, is unknown. Only
$\psi(R_N)$ is available on the grid.  Using the asymptotic relation~\eqref{eq:matching}, however, it is possible to calculate the required
matrix element as follows:
\[\begin{split} (Tc)_{N-1} = T_{N-1,N-2} c_{N-2} + T_{N-1,N-1} c_{N-1}
+ T_{N-1,N} b \sqrt{2/ R_{N}} \psi_l(R_{N}).
\end{split}\]

To calculate the kinetic energy in the first point of the finite
difference grid, the second derivative of the wave function has to be
known.  To approximate the latter with a finite difference formula,
one needs the wave function in the grid points $R_{N-1}$, $R_{N}$ and
$R_N + h$. Again it is possible to apply~\eqref{eq:matching} to obtain
$\psi(R_{N-1})$ in terms of $c_{N-1}$:
\[ \psi^{\prime\prime}(R_N) \approx \frac{c_{N-1}/( b \sqrt{2/ R_{N-1}}) - 2
\psi(R_{N}) + \psi(R_{N}+h)}{h^2}.
\]

The coupling between both representations around the matching point is
sketched in Figure \ref{fig:coupling}, together with the terms involved
to determine the correct matching.
\begin{figure}[ht]
\begin{center} \hspace{10pt} Oscillator \hspace{30pt} Finite
Differences
$$
\overbrace{\vphantom{\rule[5pt]{5pt}{5pt}}c_{n-2} \;\; \quad c_{N-1}
\qquad \psi(R_N)}^{T c_{N-1}} \quad \psi(R_{N}+h)
\makebox[0pt][r]{$\underbrace{ \phantom{c_{N-1} \qquad \psi(R_N) \quad
\psi(R_{N+1}) \vphantom{\rule[-20pt]{20pt}{20pt}} }}_{\psi''(R_N)}$}
$$
\ifx\JPicScale\undefined\def\JPicScale{0.6}\fi \unitlength \JPicScale
mm
\begin{picture}(120,30)(-1,-32) \linethickness{0.3mm}
\put(0,20){\line(1,0){120}} \linethickness{0.3mm}
\put(90,20){\circle*{3}}

\linethickness{0.3mm} \put(20,20){\circle{3}}

\linethickness{0.3mm} \put(40,20){\circle{3}}

\linethickness{0.3mm} \put(67,20){\circle*{3}}

\linethickness{0.1mm} \put(53,-6){\line(0,1){50}}
\end{picture}
\end{center} \vspace{-60pt}
\caption{Illustration on the calculation of the kinetic energy matrix
  elements that are calculated in the last point $R_{N-1}$ of the
  oscillator representation and in the first point $R_N$ of the finite
  difference representation.  To calculate $T$ applied to a solution
  vector we need to translate the oscillator representation to the
  grid.  Vice versa for the application of the finite difference
  stencil approximation to the second
  derivative. \label{fig:coupling}}
\end{figure}
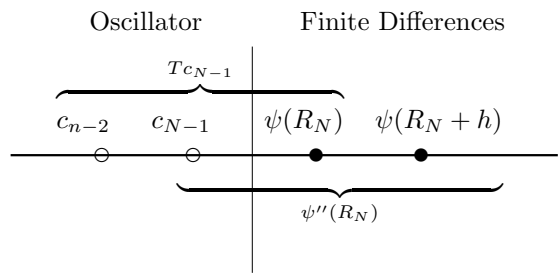

However, in the next section we will show that the asymptotic matching
condition that is used in \cite{PhysRevC.82.064603} to couple both
representations, is only accurate
to $\mathcal{O}(R_N^{-1})$, while the outer grid is accurate to order
$\mathcal{O}(h^2)=\mathcal{O}(R_N^{-2})$.  Therefore the largest error is made
in
the matching condition that couples both representations.

Note that introducing an absorbing layer is easy once the oscillator
representation is coupled to a grid representation.  For example ECS is
implemented by extending the grid with a complex scaled part
\cite{McCurdy2004}.

\section{Higher--order asymptotic formula}
\label{sec:higherorder}
In this section we
derive a higher--order asymptotic formula. It not only takes into
account the function values in the turning point  $R_n$ but also the third
derivative in this point.  A similar asymptotic formula was derived by S.
Igashov in
\cite{igashov_jmatrix}, however, it did not include the contributions
from the Fourier space. 

\begin{proposition}
\label{lemma1} 
Let $\psi(x)$ be a regular scattering state that behaves as $x^l$ in
$x=0$, that is infinitely differentiable and $\varphi_n(x)$ is a reduced
oscillator state, solution of~\eqref{eq:oscillator}.  Then the
projection of the scattering state on the oscillator state can be
approximated as
\begin{equation}\label{eq:first_corr} 
  c_{n} = \int_0^\infty \,  \varphi_n(x)\, \psi(x) dx = b \sqrt{\frac2{R_{n}}} \left( \psi(R_{n})    + \frac{b^4}{6 R_{n}} \psi^{\prime\prime\prime}(R_{n}) \right) +  \mathcal{O}(R_{n}^{-5/2}).
\end{equation} 
This expresses the expansion coefficient in terms of the wave function
and its derivatives in the classical turning point $R_{n}$.
\end{proposition}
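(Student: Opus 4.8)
The plan is to build on the decomposition $c_n \approx I_0 + I_{R_n}$ recalled in the excerpt and to push the turning-point contribution $I_{R_n}$ one order further than in \cite{Vanroose2001}. The first step is to dispose of the Fourier-space term $I_0$: since $\psi$ is a regular scattering state, its Bessel transform $\tilde\psi(k)$ is concentrated near the physical momentum $k=\sqrt{2E}$ and decays faster than any power of $k$ away from it, so $\tilde\psi(K_n)$ with $K_n=\sqrt{4n+2l+3}/b\to\infty$ lies beyond all algebraic orders and is absorbed into the $\mathcal{O}(R_n^{-5/2})$ remainder. Thus only the classical turning point in coordinate space contributes, and the problem reduces to evaluating $I_{R_n}=\int \varphi_n(x)\psi(x)\,dx$ with $\varphi_n$ replaced by its Airy approximation near $R_n$.

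Next I would insert the Airy approximation displayed above, rescale to the natural turning-point variable $t=\lambda_n(x-R_n)$ with $\lambda_n=(2R_n/b^4)^{1/3}$, and Taylor-expand $\psi(R_n+t/\lambda_n)$ about $R_n$. Integrating term by term reduces everything to the Airy moments $\mu_m=\int_{-\infty}^\infty t^m\,\mathrm{Ai}(t)\,dt$. These do not converge absolutely, because $\mathrm{Ai}(t)$ decays only like $|t|^{-1/4}$ as $t\to-\infty$; I would therefore define and evaluate them through the oscillatory integral representation $\mathrm{Ai}(t)=\frac{1}{2\pi}\int e^{i(s^3/3+st)}\,ds$, which gives $\mu_m=i^m\,\frac{d^m}{ds^m}e^{is^3/3}\big|_{s=0}$. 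The key arithmetic is that $\mu_m$ vanishes unless $3\mid m$ (because $e^{is^3/3}$ is a power series in $s^3$ alone), with $\mu_0=1$, $\mu_3=2$, $\mu_6=40$. The vanishing of $\mu_1$ and $\mu_2$ annihilates the $\psi'$ and $\psi''$ terms, while $\mu_3=2$ produces exactly the $\frac{b^4}{6R_n}\psi'''(R_n)$ correction once $\lambda_n^{-3}=b^4/(2R_n)$ is substituted, and the overall prefactor collapses to $b\sqrt{2/R_n}$ as in the leading-order formula.

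Finally I would assemble the error estimate, where the main obstacle lies. Three sources must be shown to be $\mathcal{O}(R_n^{-5/2})$: the next nonvanishing Taylor moment (the $\psi^{(6)}$ term, carrying $\lambda_n^{-6}\sim R_n^{-2}$ relative to the leading term), and the two corrections to the leading Airy approximation itself, namely the deviation of the exact Langer argument from the linear one and the amplitude factor of Olver's uniform expansion. The delicate point is that the argument correction behaves like $\lambda_n\xi^2/R_n\sim t^2R_n^{-4/3}$ and, multiplied by $\mathrm{Ai}'(t)$ and the leading prefactor, naively threatens a contribution at the intermediate order $R_n^{-11/6}$, which would be \emph{larger} than the claimed remainder and would spoil the estimate. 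The resolution, which I would make precise, is that these would-be dangerous terms are once again proportional to $\mu_1$ and $\mu_2$ through $\int t^2\mathrm{Ai}'\,dt=-2\mu_1=0$ and $\int t^3\mathrm{Ai}'\,dt=-3\mu_2=0$, so they cancel and the first surviving contribution of the Airy correction is a $\psi''$ term already of order $R_n^{-5/2}$. Establishing this cancellation cleanly — ideally by organizing the entire computation as a single stationary-phase expansion of the oscillator integral, so that all corrections carry the common phase $s^3/3$ and the moment structure is manifest — is the step I expect to be hardest; the Fourier term $\tilde\psi(K_n)$ remains negligible throughout.
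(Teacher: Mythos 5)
Your proposal is correct, and its computational core is the same as the paper's proof of Proposition \ref{lemma1}: discard the boundary/Fourier contribution, replace $\varphi_n$ near the turning point by the Airy approximation, Taylor-expand $\psi$ about $R_n$, and reduce everything to Airy moments that vanish unless the power is a multiple of three; your values $\mu_0=1$, $\mu_3=2$, $\mu_6=40$ and the resulting coefficients reproduce \eqref{eq:ho_matching}--\eqref{eq:second_corr} exactly. Where you genuinely differ is in how the error of the Airy approximation itself is controlled. The paper uses the multiplicative ansatz $\varphi_n=(1+u_n)\varphi_n^{(0)}$ and extracts $u_n=-(x-R_n)/(2R_n)$ by dominant balance in the oscillator equation; this factor then rides through the moment calculation and produces the $\psi^{(3k+2)}$ terms, the first being the $-\tfrac{b^4}{4R_n^2}\psi''$ correction at precisely the claimed order. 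You instead keep the bare Airy term and control its defect via the Langer variable and Olver's uniform expansion, observing that the argument and amplitude corrections naively inject terms at $\mathcal{O}(R_n^{-11/6})$ and killing them with $\int t^2\mathrm{Ai}'\,dt=-2\mu_1=0$ and $\int t^3\mathrm{Ai}'\,dt=-3\mu_2=0$. This is more than a stylistic difference, because the paper's dominant balance silently drops the residual $-u_n'\,{\varphi_n^{(0)}}'=\tfrac{1}{2R_n}{\varphi_n^{(0)}}'\propto\mathrm{Ai}'(t)$, which is of the \emph{same} order as the terms being balanced; the paper's $u_n$ is therefore incomplete at its own order, and the missing piece is exactly the $\mathrm{Ai}'$ structure you isolate. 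Your cancellation argument is what shows the omission is harmless: one can check that Langer's correction $-\tfrac{t}{10}\mathrm{Ai}+\tfrac{t^2}{10}\mathrm{Ai}'$ and the paper's $-\tfrac{t}{2}\mathrm{Ai}$ give identical contributions when integrated against the Taylor series (the $\psi''$ coefficient $\tfrac{1}{10}+\tfrac{4}{10}=\tfrac12$ times $-a_n^3/R_n$, with the first discrepancy only at $\mathcal{O}(R_n^{-7/2})$), so your route recovers the paper's formula while supplying the justification of the $\mathcal{O}(R_n^{-5/2})$ remainder that the paper's own argument glosses over. The one soft spot on your side is the claim that $\tilde{\psi}(K_n)$ decays faster than any power: for a regular scattering state the decay is generically only algebraic, governed by the derivatives of $\psi$ at the origin (e.g. $\tilde{\psi}(k)=\mathcal{O}(k^{-5})$ when $\psi(0)=\psi''(0)=0$), but this is still fast enough to be absorbed into $\mathcal{O}(R_n^{-5/2})$, and the paper is no more precise on this point.
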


\begin{proof} The oscillator Eq.~\eqref{eq:oscillator} can be
rewritten as
\begin{equation} \left[-\frac{d^2}{dx^2} + \frac{2 R_n (x-R_n)}{b^4}
+ \frac{(x-R_n)^2}{b^4}\right] \varphi_n(x)=0.
\label{eq:osc_rewritten}
\end{equation} For $R_n \gg |x-R_n|$ (which means either $n\gg1$ or
$|x-R_n| \approx 0$), we can neglect the quadratic term in this
equation and get the Airy equation with the solution
\begin{equation} \varphi_n(x) \approx \varphi_n^{(0)}(x) = \frac 2 b
\left( \frac{b^4}{2 R_{n}} \right)^{1/6} \mathrm{Ai} \left[
\left(\frac{2 R_{n}}{b^4}\right)^{1/3}(x-R_{n})\right],
\label{eq:airy_1st}
\end{equation} where the normalization is chosen such that it
coincides with the oscillator state near the classical turning point
$R_n$.

This can be further improved by writing $\varphi_n(x) = \left(1 +
u_n(x)\right)\varphi^{(0)}_n$ and inserting this in the oscillator
equation.  The equation for $u_n(x)$ becomes
\begin{equation} \label{eq:33}
  -\frac12 u_n^{\prime\prime} \varphi_n^{(0)} -   u_n^\prime {\varphi_n^{(0)}}^\prime -\frac12   u_n\,{\varphi_n^{(0)}}^{\prime \prime} + \left(R_n(x-R_n) +     \frac{1}{2}(x-R_n)^2\right) u_n \varphi_n^{(0)} =  -\frac{1}{2}(x-R_n)^2 \varphi_n^{(0)}.
 \end{equation} 
 Using that $${\varphi_n^{(0)}}^\prime= \left(2 R_n/b^4\right)^{1/3} 
\mathrm{Ai}^\prime\left[\left(2   R_{n}/b^4\right)^{1/3}(x-R_{n})\right]$$
 and $${\varphi_n^{(0)}}^{\prime\prime}= \left(2 R_n/b^4\right)^{2/3}
 \mathrm{Ai}^{\prime\prime}\left[\left(2   
R_{n}/b^4\right)^{1/3}(x-R_{n})\right]$$
the equation~\eqref{eq:33} becomes
\begin{align}
  &-\frac12 u_n^{\prime\prime} \varphi_n^{(0)} - u^\prime  
\left(\frac{2R_n}{b^4}\right)^{1/3} \mathrm{Ai}^\prime\left[\left(\frac{2
R_{n}}{b^4}\right)^{1/3}(x-R_{n})\right]\nonumber\\
 &-\frac12 u_n  \, \left(\frac{2 R_n}{b^4}\right)^{2/3}
\mathrm{Ai}^{\prime\prime}\left[\left(\frac{2
R_{n}}{b^4}\right)^{1/3}(x-R_{n})\right] + \left(R_n(x-R_n) +
\frac{1}{2}(x-R_n)^2\right) u_n \, \varphi_n^{(0)} = -\frac{1}{2}(x-R_n)^2
\varphi_n^{(0)}.
 \end{align} 
 In the limit $n \rightarrow \infty$, the term with $R_n(x-R_n)$
 dominates compared to $(1/2)(x-R_n)^2$ and the other terms since both
 the first and second derivative of $\mathrm{Ai}(x)$ around zero are
 bounded.  We find that $u_n(x) \approx -(x-R_n)/(2 R_n)$ is a
 solution of the remaining equation.

 At this time we will not consider the contributions to the integral for
 the boundary at zero.  This contribution will be equal to the
 contribution from the Fourier space as in Eq.~\eqref{eq:prl_firstorder} and for scattering states and large $n$ this 
 contribution is negligible. So we can lower the integration boundary and write
\begin{equation} 
  c_{n} \approx
  \frac{2\sqrt{a_n}}{b}\int_{-\infty}^{\infty} \psi(x+R_{n}) \mathrm{Ai}
  (x/a_n) \left(1-\frac{x}{2R_n}\right) dx,
 \label{eq:airy_approx}
\end{equation}
where $a_n:=(b^4/2R_{n})^{1/3}$ and we have substituted the integration
variable $x\rightarrow x+R_n$.

The value of this integral will be determined by the behavior 
around $R_n$, the point of stationary phase. Since the function
$\psi$ is infinitely differentiable we can Taylor expand
\begin{equation}
  c_{n} \approx  \frac{2\sqrt{a_n}}{b}\int_{-\infty}^{\infty}
\sum_{m=0}^{\infty}
  \frac{x^m}{m!}\psi^{(m)}(R_{n}) \left(1-\frac{x}{2R_n}\right)  \mathrm{Ai}
(x/a_n) dx
\end{equation}
All integrals in these series can be calculated explicitly using
specific properties of Airy function (see page 52 of \cite{airy}).
\begin{equation} 
  \int_{-\infty}^{\infty} \mathrm{Ai}(x) x^{3k} dx = \frac{(3k)!}{3^k k!} \quad \text{and} \quad 
  \int_{-\infty}^{\infty} \mathrm{Ai}(x) x^{3k+1} dx =  \int_{-\infty}^{\infty} \mathrm{Ai}(x) x^{3k+2} dx = 0. 
\end{equation}
We finally get
\begin{equation}\label{eq:ho_matching} 
\begin{split}
  c_{n} &\approx \frac{2 a_n^{3/2}}{b} \sum_{k=0}^\infty \frac{a_n^{3k}}{3^k
k!}\left( \psi^{(3k)}(R_{n}) -  \frac{a_n^3}{2R_n}\psi^{(3k+2)}(R_{n})\right) \\
    &= b \sqrt{\frac2{R_{n}}} \sum_{k=0}^{\infty}
    \frac{1}{k!}\left(\frac{b^4}{6 R_{n}}\right)^k
    \left(\psi^{(3k)}(R_{n}) -
      \frac{b^4}{4R_n^2}\psi^{(3k+2)}(R_{n})\right).
\end{split}
\end{equation} 
In lowest order (in terms of $1/R_n$) this gives us exactly the
initial relation (\ref{eq:matching}). Including the next order
correction we get a relation (\ref{eq:first_corr}). For testing
purposes we also include terms of the next order $1/R_n^2$
\begin{equation}
  \begin{split} 
    c_{n} &= b \sqrt{\frac2{R_{n}}} \left(\psi(R_{n}) + \frac{b^4}{6 R_{n}}
\psi^{\prime\prime\prime}(R_{n})\right. \left.
+\frac1{R_{n}^2}\left(\frac{b^8}{72}\psi^{(6)}(R_{n}) -
        \frac{b^4}{4}\psi^{\prime\prime}(R_{n})\right) +
      \mathcal{O}(\frac1{R_n^3})\right)
  \end{split}  \label{eq:second_corr}
\end{equation}
\end{proof}
Note that a similar result is readily obtained for Cartesian harmonic
oscillator states that are based on the Hermite polynomials. There,
however, there is a contribution from both turning points, one from
$R_n=\sqrt{2n + 1}$ and $R_n = -\sqrt{2n+1}$.

\begin{corollary} Let $\psi(x)$ be a function that behaves as $x^l$ in $x=0$ that is infinitely
  differentiable, then the asymptotic expansion coefficient is
\begin{align} 
c_{n} = &\int_0^\infty \,\varphi_n(x)\, \psi(x) dx = b \sqrt{\frac2{R_{n}}}
\left( \psi(R_{n}) + \frac{b^4}{6 R_{n}} \psi^{\prime\prime\prime}(R_{n})
\right)\nonumber \\
 & + \frac{(-1)^n}b \sqrt{\frac2{K_{n}}} \left( \tilde{\psi}(K_{n}) +
\frac{1}{6 b^4 K_{n}} {\tilde{\psi}}^{\prime\prime\prime}(K_{n}) \right) +
\mathcal{O}(n^{-5/4}).
\label{eq:withfourier}
\end{align}
\end{corollary}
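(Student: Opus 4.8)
The plan is to read this corollary as Proposition~\ref{lemma1} with the Fourier--space contribution restored. The asymptotic value of $c_n=\int_0^\infty\varphi_n(x)\psi(x)\,dx$ splits into a contribution $I_{R_n}$ from the stationary point at the coordinate turning point $R_n$ and a contribution $I_0$ from the left endpoint at the origin. In the proof of Proposition~\ref{lemma1} the endpoint piece $I_0$ was dropped because it is negligible for scattering states, whose Bessel transform decays; here $\psi$ is a general function with the prescribed behaviour at the origin, so $I_0$ must be kept. I would produce the first bracket from $I_{R_n}$ exactly as before, and identify $I_0$ with the turning--point contribution of the \emph{dual} integral in momentum space, the identification being precisely the symmetry of Proposition~\ref{lem:interchange}.

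For the coordinate piece I would invoke the computation already carried out in the proof of Proposition~\ref{lemma1}: approximating $\varphi_n$ near $R_n$ by the Airy function $\varphi_n^{(0)}$, inserting the correction factor $1-(x-R_n)/(2R_n)$, Taylor expanding $\psi$, and integrating against the moments of $\mathrm{Ai}$. None of these steps uses that $\psi$ is a scattering state, so the result
\[
I_{R_n}=b\sqrt{\tfrac2{R_n}}\Bigl(\psi(R_n)+\tfrac{b^4}{6R_n}\psi^{\prime\prime\prime}(R_n)\Bigr)+\mathcal O(R_n^{-5/2})
\]
holds for any infinitely differentiable $\psi$ with the assumed behaviour at $0$. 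This is the first bracket of the claimed formula.

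For the endpoint piece I would pass to momentum space through Parseval, as in the proof of Proposition~\ref{lem:interchange}, writing $c_n=\int_0^\infty\tilde\varphi_n(k)\tilde\psi(k)\,dk$. Since $\tilde\varphi_n(k)=(-1)^n b\,\varphi_n(kb^2)$ is, up to the sign $(-1)^n$, the normalised reduced oscillator state of oscillator length $\tilde b=1/b$ with turning point $K_n=R_n/b^2$ --- and satisfies the oscillator equation~\eqref{eq:oscillator} with $b\mapsto1/b$ --- the Airy analysis of the previous paragraph applies verbatim in the variable $k$. Running it with the substitutions $b\mapsto1/b$, $R_n\mapsto K_n$, $\psi\mapsto\tilde\psi$ and carrying the prefactor $\tilde b=1/b$ together with the sign $(-1)^n$ yields
\[
I_{K_n}=\frac{(-1)^n}{b}\sqrt{\tfrac2{K_n}}\Bigl(\tilde\psi(K_n)+\tfrac1{6b^4K_n}\tilde\psi^{\prime\prime\prime}(K_n)\Bigr)+\mathcal O(K_n^{-5/2}),
\]
the second bracket. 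The hypotheses on $\psi$ are exactly what is needed here: the behaviour of $\psi$ at the origin guarantees the corresponding behaviour of $\tilde\psi$ at $k=0$, and smoothness of $\psi$ yields smoothness of $\tilde\psi$.

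It remains to add the two pieces and bound the remainder. Because $R_n=b\sqrt{4n+2l+3}$ and $K_n=R_n/b^2$ both grow like $\sqrt n$, each error $\mathcal O(R_n^{-5/2})$ and $\mathcal O(K_n^{-5/2})$ is $\mathcal O(n^{-5/4})$, which gives the stated remainder. The step I expect to be the main obstacle is the bookkeeping that shows $c_n=I_{R_n}+I_{K_n}$ without double counting: one must verify that the endpoint piece $I_0$ of the coordinate integral coincides with the turning--point piece $I_{K_n}$ of the momentum integral, and dually that $I_{R_n}$ is the endpoint piece of the momentum integral, so that each representation contributes exactly one bracket. This self--consistency is the content of Proposition~\ref{lem:interchange}: the two brackets are exchanged, up to the factors $(-1)^n$ and $b$, under $\psi\leftrightarrow\tilde\psi$, $R_n\leftrightarrow K_n$, $b\leftrightarrow1/b$, and checking that the explicit formula respects this symmetry is a convenient consistency test.
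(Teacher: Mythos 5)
Your proposal follows the same route as the paper's own proof: split $c_n$ into the turning--point contribution at $R_n$ (handled by the Airy analysis of Proposition~\ref{lemma1}) and the endpoint contribution at the origin, then use the symmetry of Proposition~\ref{lem:interchange} to identify the endpoint piece with the turning--point contribution of the dual integral $(-1)^n b\int_0^\infty \varphi_n(kb^2)\tilde\psi(k)\,dk$ in momentum space, with the substitutions $b\mapsto 1/b$, $R_n\mapsto K_n$ producing the second bracket. Your treatment is in fact more explicit than the paper's (the $b\mapsto 1/b$ bookkeeping and the check that $\mathcal{O}(R_n^{-5/2})=\mathcal{O}(K_n^{-5/2})=\mathcal{O}(n^{-5/4})$ are left implicit there), but it is the same argument.
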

\begin{proof} The asymptotic formula should have the same result when
  we interchange $\psi(x)$ with its Bessel transform $\tilde{\psi}(k)$
  as a result of Proposition \ref{lem:interchange}. 
 Indeed, the
  integral 
  \[
  c_n  = (-1)^n b \int_0^\infty \varphi_n(k)\tilde{\psi}(k)dk
\]
will have a contribution from integral boundary near 0 and a
contribution from the turning point $K_n$ in Fourier space.  The
latter is, with the help of the previous results, $(-1)^n
b\sqrt{2/K_n}(\tilde{\psi}(K_n) +
\tilde{\psi}^{\prime\prime\prime}(K_n)/6b^4K_n)$.  While the
contribution from boundary become the contribution from the turning
point $R_n$ in coordinate space.

\end{proof}

\begin{example} We illustrate the convergence of the asymptotic
  formula \eqref{eq:withfourier} with an application to the function $\psi(x) =
\exp(-a
  x)\sin(x)$.  We give results for two choices of $a$. As the scale of
  the representation is defined by the oscillator length $b$, the
  result will change with its choice.  If the product $a b$ is large,
  the Fourier components will dominate in the expansion
  coefficient. When $a b$ is small the function $\psi$ resembles a
  scattering state, on a scale defined by $b$, and the coordinate
  approximation will dominate. However, as Figure~\ref{fig:fourier}
  illustrates the combined formula always gives the correct result. We
  see an overall convergence with $n^{-5/4}$. This is smaller than
  $\mathcal{O}(h_n^2)$ which goes as $\mathcal{O}(1/n)$. In general, it is
possible to construct a function, for which both coordinate and Fourier
terms fail to represent the exact oscillator coefficient, but the combined
formula
remains valid even in this case (see Figure~\ref{fig:fourier2}).

\begin{figure}
  \begin{center}
    \begin{tabular}{cc}
      \includegraphics[width=0.5\linewidth]{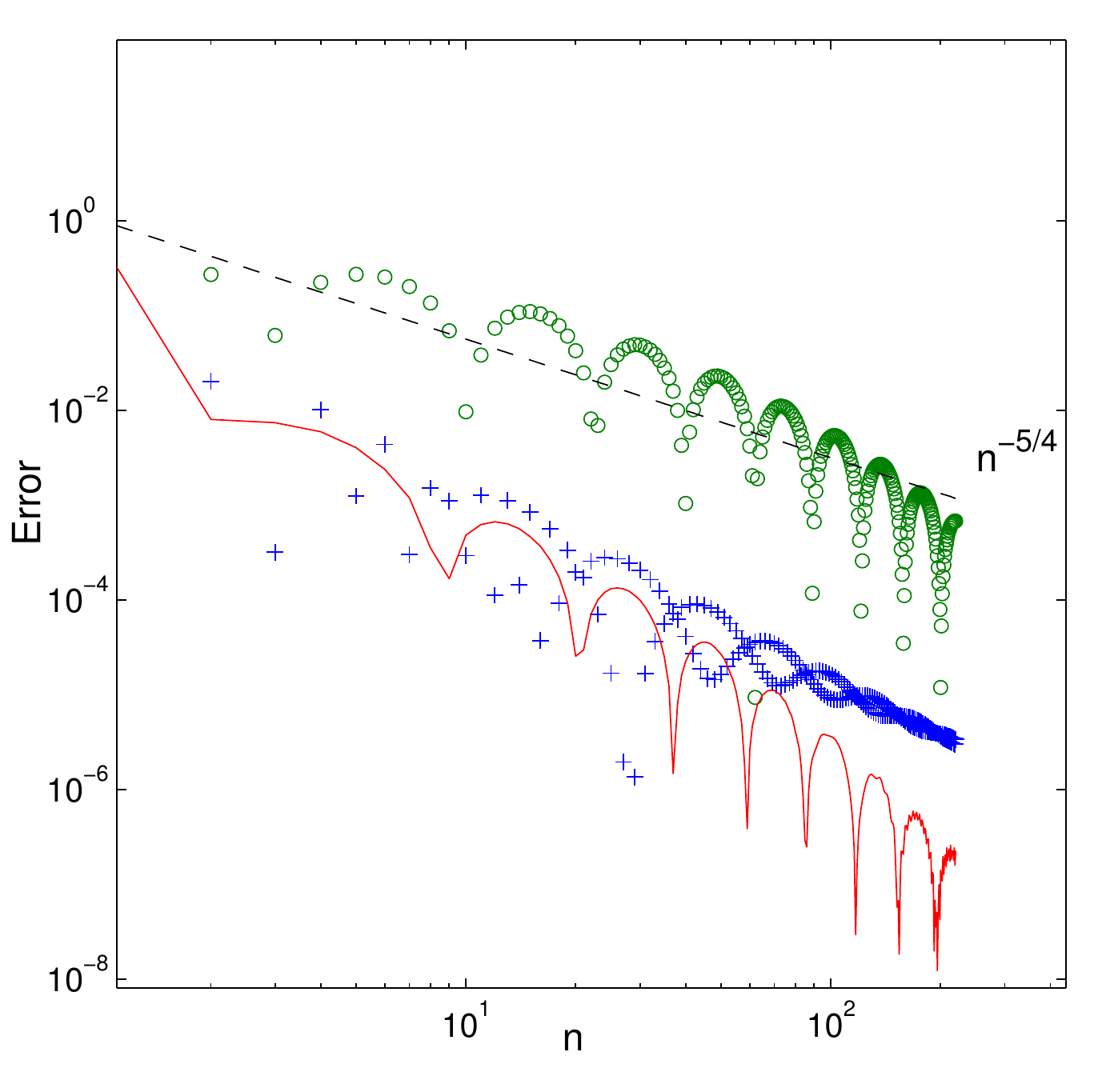}&\includegraphics[width=0.5\linewidth]{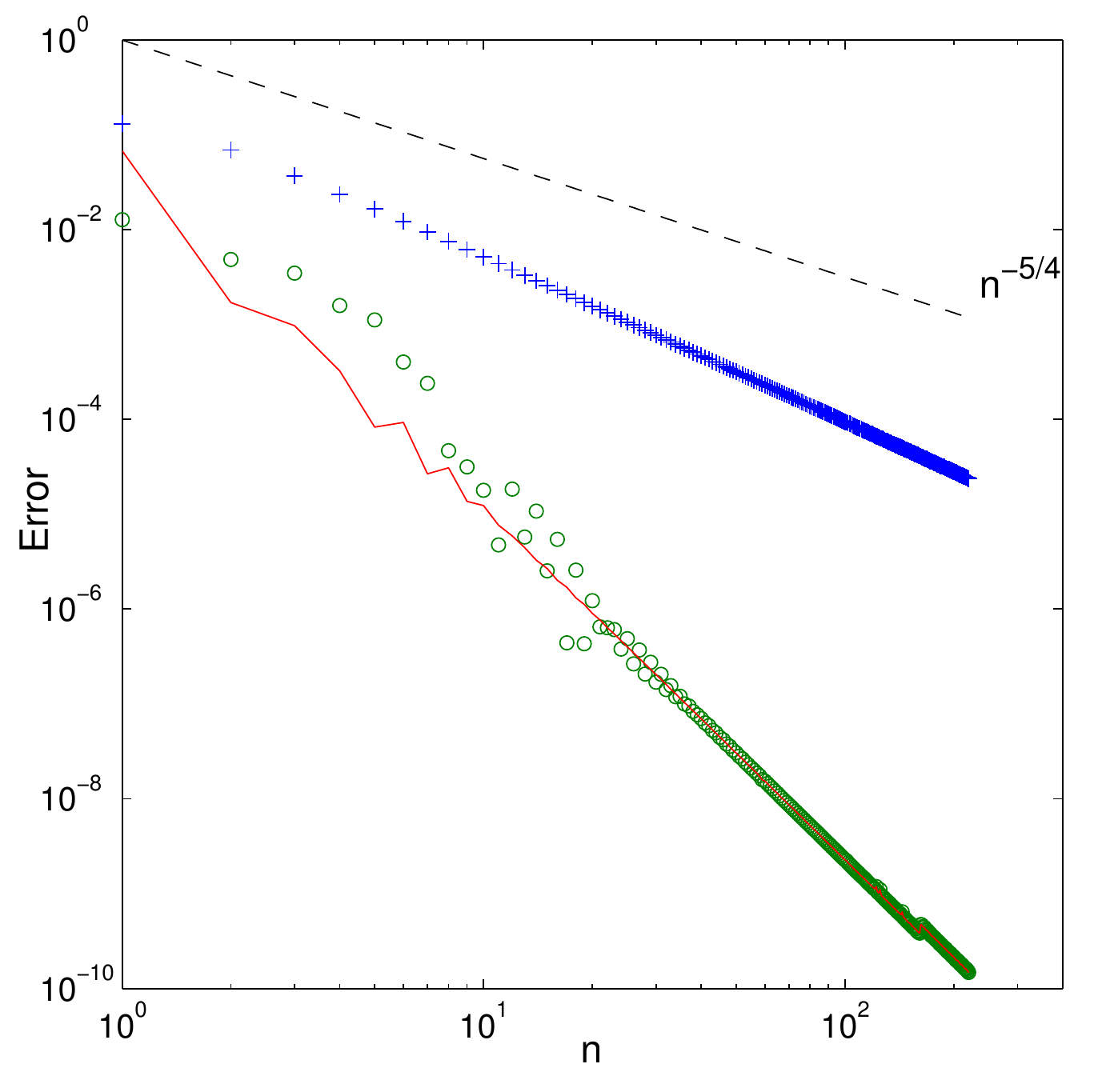}
    \end{tabular}
  \end{center}
  \caption{ Convergence of the general asymptotic formula
(\ref{eq:withfourier}) with only coordinate terms (crosses), only
Fourier terms (circles), both coordinate and Fourier terms (solid
line) for test function $\psi(x) = \exp(-a x)\sin(x)$ with different
values of $a$: left figure --- $a=0.2$, right figure --- $a=1.5$
($b=1$ in both cases)}
  \label{fig:fourier}
\end{figure}

\begin{figure}
  \begin{center}
\includegraphics[width=0.5\linewidth]{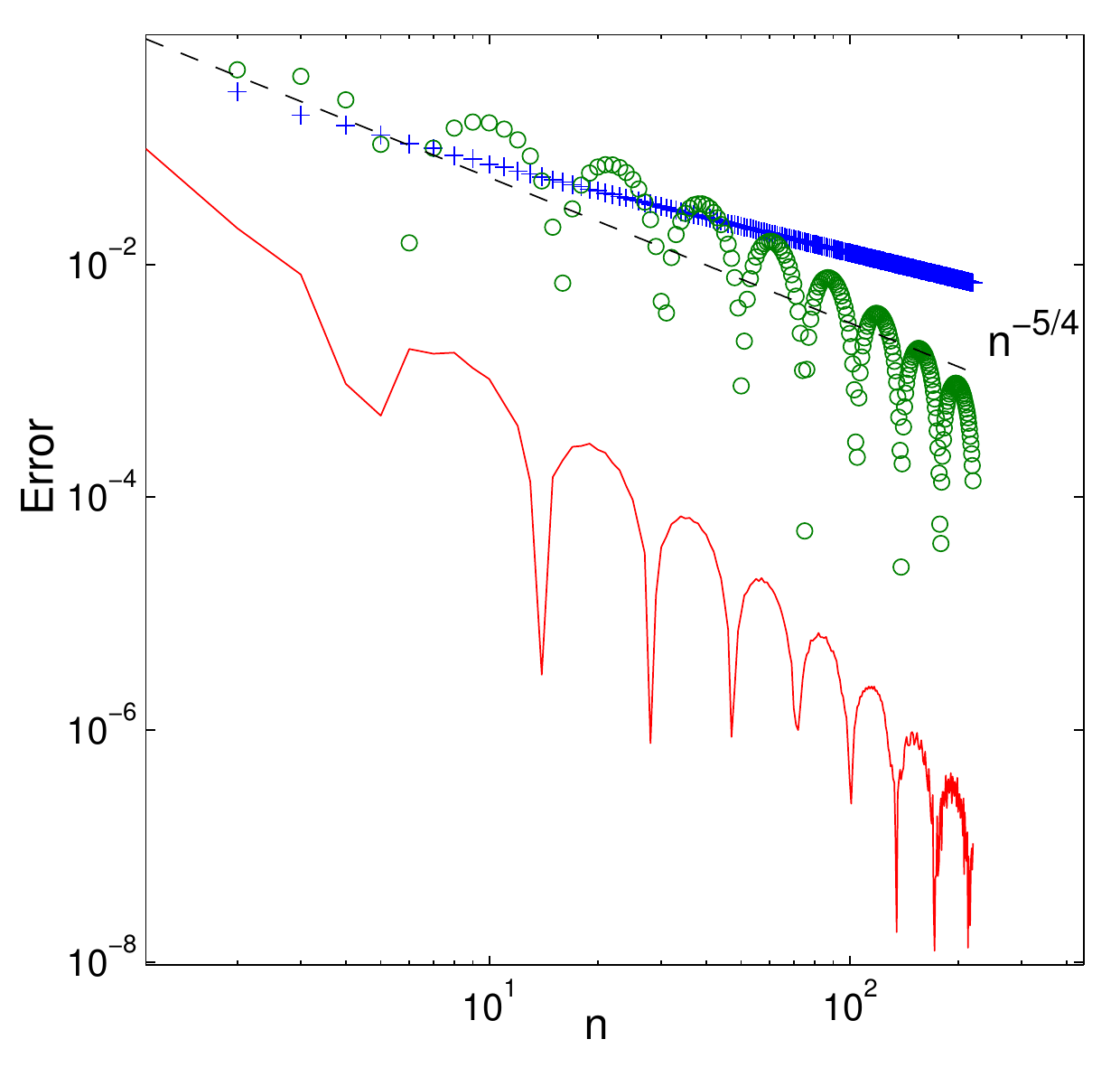}
  \end{center}
  \caption{ Convergence of the general asymptotic formula
    (\ref{eq:withfourier}) for the test function $\psi(x) = \exp(-a
    x)\cos(x)$ with $a=0.2$, for which neither coordinate (crosses)
    nor Fourier terms (circles) give accurate result, while the
    combined formula (solid line) does.}
  \label{fig:fourier2}
\end{figure}

\end{example}
\begin{example} We have verified the asymptotic relations derived in
  the previous section with a numerical experiment using the wave
  function $\psi(x)=\sin(kx)$. We calculate oscillator coefficients
  directly and compare them against the values obtained with
  asymptotic formulas of different order derived previously (see
  Fig.~\ref{fig:coeff_as}).
\begin{figure}
  \begin{center}
	\includegraphics[width=0.5\linewidth]{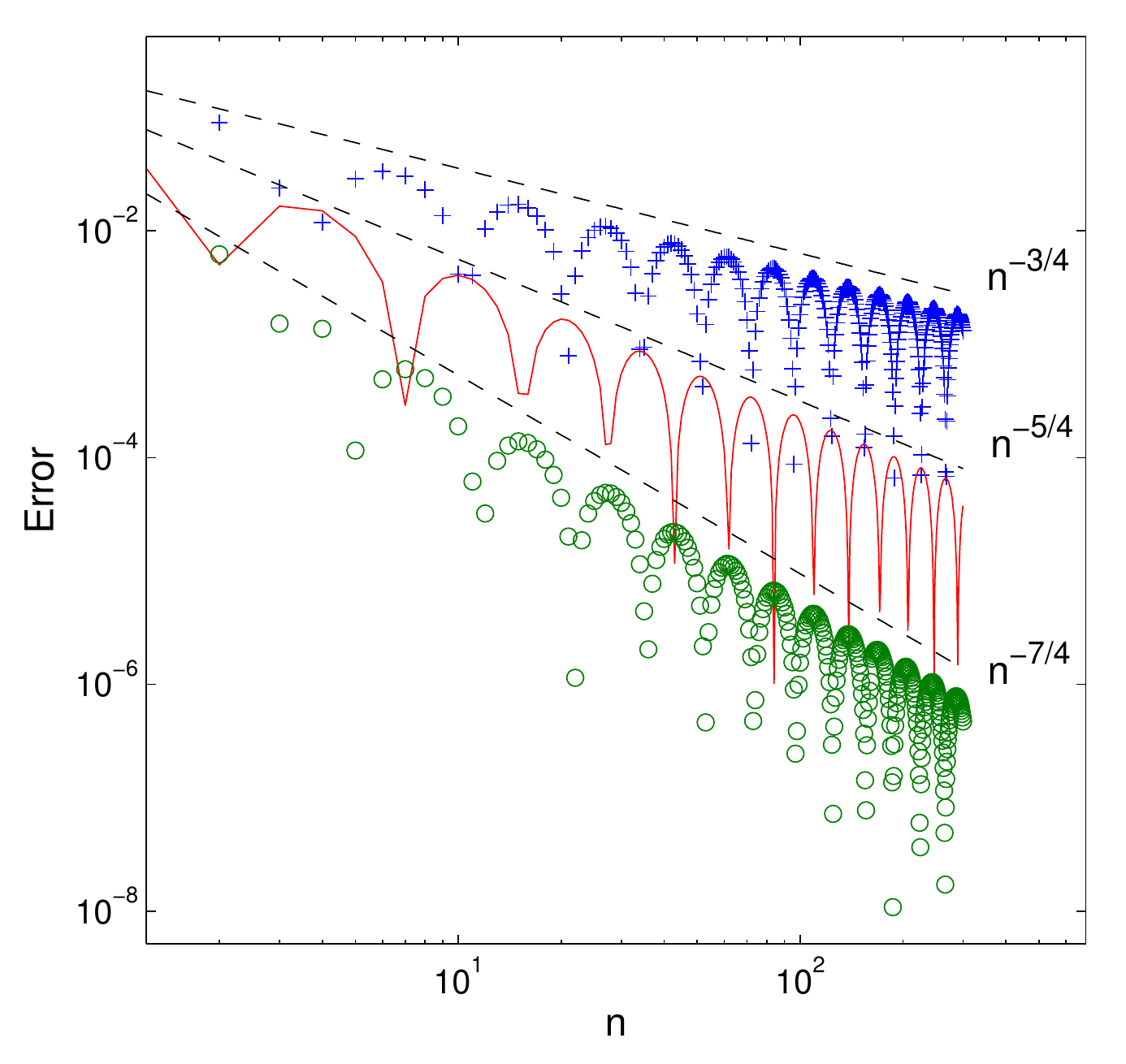}\includegraphics[
width=0.5\linewidth]{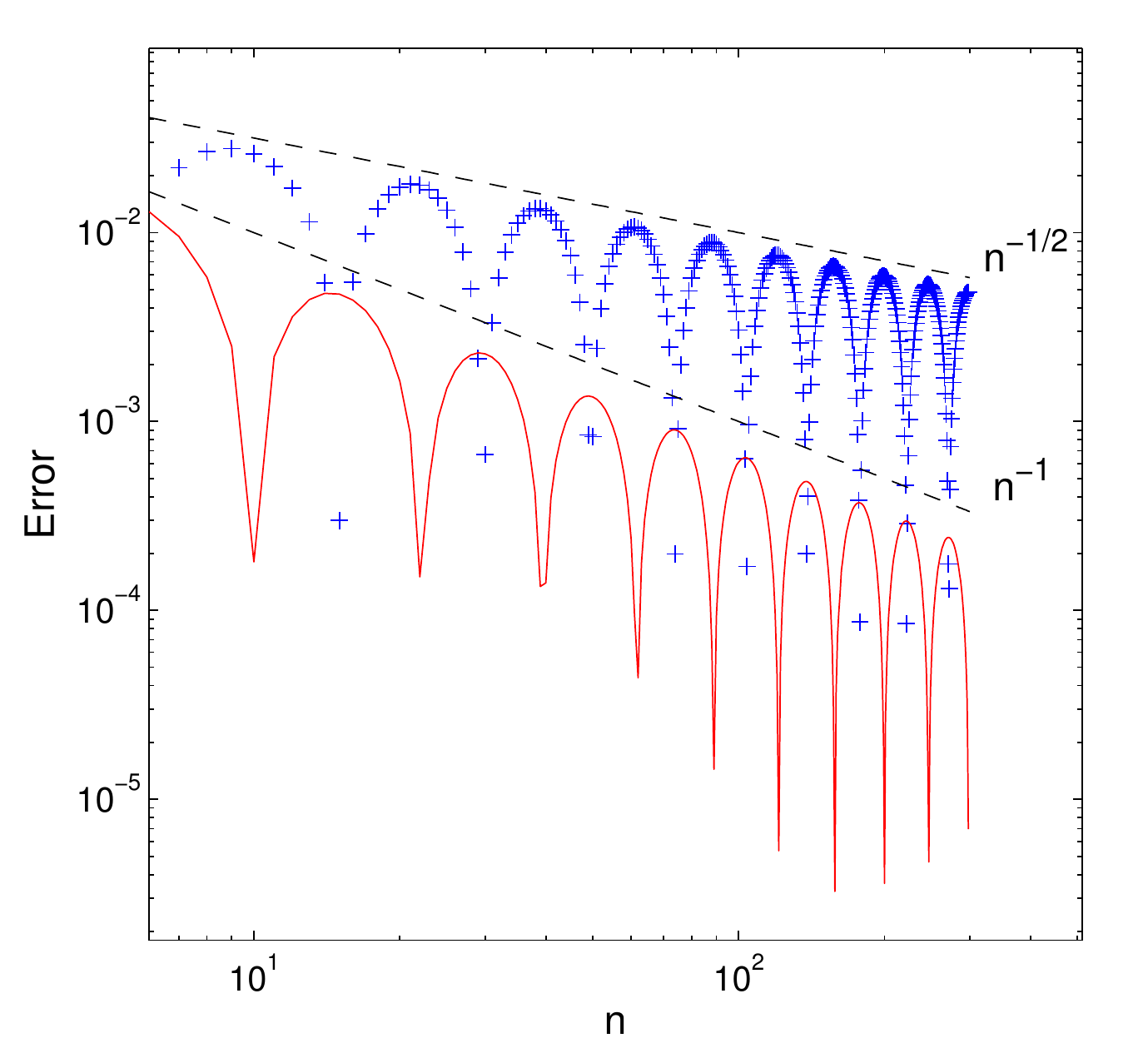}
      \end{center}
      \caption{Convergence of the asymptotic formula
(\ref{eq:second_corr}) (left) and the inverse asymptotic formula
(\ref{eq:first_corr_inv}) (right) with different number of terms included
(crosses --- one term, solid line --- two terms, circles --- three terms). The
test function has the form $\psi(x)=\sin(kx)$ with $k=1.5$ and the oscillator
length $b=0.8$}
\label{fig:coeff_as}
\end{figure}
\end{example}

\subsection{Inverse relation} 
The asymptotic expansion coefficient is an approximation with the help
of the functions values of $\psi$ evaluated at certain grid points.
It is also useful to derive in inverse relation that allows us to
construct the function value in certain grid points given the
expansion coefficients.

To approximate $c_n$ with Eq.~\eqref{eq:first_corr} the function
value and its third derivative need to be calculated at the turning
point $R_n$. When only the function values of $\psi$ are available at
the turning points $R_n$, the third derivative can be approximated by
finite differences.

The coefficient is then calculated as
 \begin{equation} c_{n} = b \sqrt{\frac2{R_{n}}} \left(
\psi(R_{n}) + \frac{b^4}{6 R_{n}} \sum_{k} D^{(3)}_{k}
\psi(R_{n+k}) \right) + \mathcal{O}(R_n^{-5/2}) + R_n^{-3/2} \epsilon_n, 
 \label{eq:first_corr_fd}
\end{equation}
where $D^{(3)}$ is the matrix with coefficients that approximates the
third derivative and $k$ indicates the stencil points and $\epsilon_n$
is the error term of this approximation. In case of 5-point finite
difference approximation $k\in\{-2,-1,0,1,2\}$.  Note that the grid of
turning points $R_n$ is an irregular grid and the coefficients will
depend on the local distances between the neighboring grid points. The
error of the approximation $\epsilon_n$ will also depend on the local
grid distances as $\mathcal{O}(h^{p}) \sim \mathcal{O}(R_n^{-p})$,
where $p$ is the order of the approximation.

The resulting transformation can be represented by a banded sparse matrix $U$
\begin{equation}
  c_{n}  = \sum_{k} U_{nk} \psi(R_k) +  \mathcal{O}(R_n^{-5/2}).
\end{equation}
Note that the matrix elements of $U$ will only depend on the values of
$R_n$. Indeed, The differential operator $D^{(3)}$ only depends on the
grid distances and so do the coefficients of the asymptotic
expression. 

The relation \eqref{eq:first_corr_fd} translates $\psi$ on the grid of
turning points to corresponding $c_n$.  It is also useful to derive
the inverse relation that gets $\psi(R_n)$ from known values of
$c_n$. We can not use a direct inversion of Eq.~\eqref{eq:first_corr} since it involves the third derivative, a dense
operator in the oscillator representation. However, an approximate
inverse relation can be obtained by rearranging terms in
(\ref{eq:first_corr_fd}) as
 \begin{equation} 
   \psi(R_{n}) = \frac1b \sqrt{\frac{R_{n}}2} c_{n} - \frac{b^4}{6 R_{n}}
\sum_{k} D^{(3)}_{k} \psi(R_{n+k}) + \mathcal{O}(R_n^{-2}) + R_n^{-1} \epsilon_n
\end{equation} 
and replacing values of the wave function in the right--hand side by
oscillator coefficients using only the first term of
(\ref{eq:matching}), i.e.  $\psi(R_n) = (1/b)\sqrt{R_n/2}\,c_n +
\mathcal{O}(R_n^{-1})$.  This only introduces an error of the order of
$\mathcal{O}(R_n^{-1})$ but combined with the $1/R_n$ this gives the
approximate inverse relation
 \begin{equation} 
   \psi(R_{n}) = \frac1b \sqrt{\frac{R_{n}}2}    c_{n} - \frac{b^3}{6
R_{n}} \sum_{k} D^{(3)}_{k}   \sqrt{\frac{R_{n+k}}2} c_{n+k} +
\mathcal{O}(R_n^{-2})
 \label{eq:first_corr_inv}
\end{equation} 
that is accurate to $\mathcal{O}(R_n^{-2})$.  An example of the
resulting numerical accuracy of this inverse asymptotic relation is
shown on the right panel of figure~\ref{fig:coeff_as}.  Also this
transformation can be presented as a sparse banded matrix
multiplication
\begin{equation}\label{eq:w_matrix}
  \psi(R_n)  = \sum_{n} W_{nk} c_{k}  + \mathcal{O}(R_n^{-2}).
\end{equation}
Again the matrix elements of $W$ only depend on the values of the turning points $R_n$. 

Combining the two relations leads to an approximate partition of
unity:
\begin{equation}
U W = I + \mathcal{O}(R_n^{-2}).
\end{equation}
It is important to note that since these transformation matrices only
depend on values of $R_n$ they can, in principle, both be defined for
an arbitrary grid.

\subsection{Approximate discretization of Operators}
With the help of these two transformation matrices $U$ and $W$ we can
now build an approximate oscillator representation of an operator $Q$
using its finite difference representation. Let $Q^{(fd)}$ be the
finite difference representation of $Q$ on the grid $R_n$. Then the
application of the $Q$ on $\psi$ can be written as 
\begin{equation*} 
(Q \psi)^{(osc)}_n = \sum_{m}Q^{(osc)}_{nm} c_m = \sum_{m}\check Q^{(osc)}_{nm} c_m + \mathcal{O}(n^{-1}), \qquad
\text{where} \quad \check Q^{(osc)}_{nm} := [U Q^{(fd)} W]_{nm}.
\end{equation*} 
We illustrate the accuracy of this relation on a second
derivative operator  $D^(2)$, as we intend to apply the constructed
representation to Helmholtz-type equations.

\begin{equation} \check D^{(2)(osc)}=U D^{(2)(fd)} W,
\label{eq:d2trans}
\end{equation} 
where $D^{(2)(fd)}$ is a finite difference matrix of the second
derivative. To analyze the accuracy of this approximation
Figure~\ref{fig:app_der} shows the result of the error operator
defined as $(D^{(osc)}-\check D^{(osc)})$ acting on the vector of
oscillator coefficients of the test function $\psi(x)=\sin(kx)$. For
comparison we show the error of the approximate identity operator,
that can be defined as $(I - \check I) = (I - U W)$. We see that both
considered error operators decay as $\mathcal{O}(n^{-1})$ in high-$n$
region. This means that our approximate operators are asymptotically
equal to the exact ones and we can expect the same order of
convergence in a solution of the scattering problem.

\begin{figure}
\begin{center}
	\includegraphics[width=0.5\linewidth]{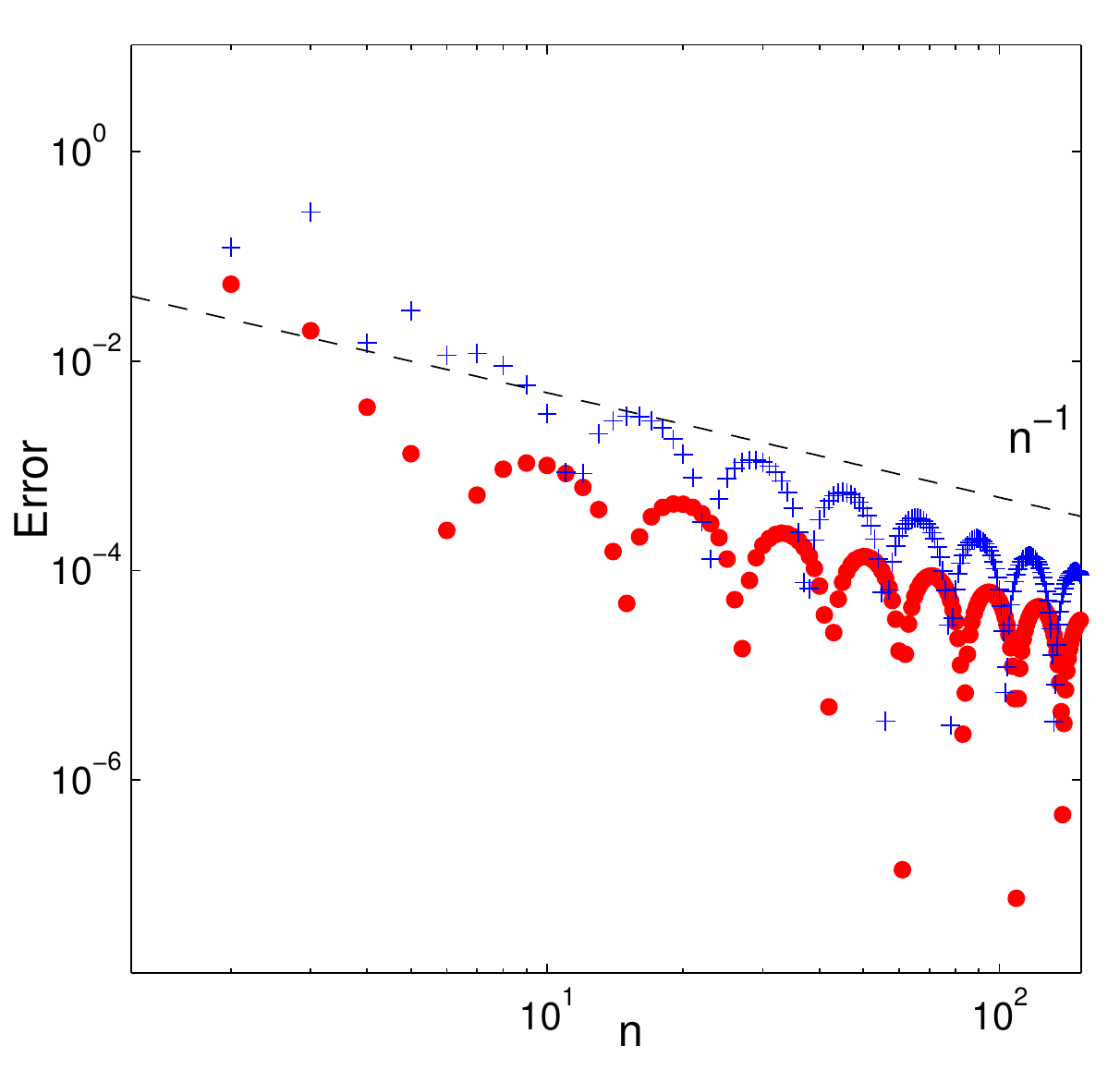}
      \end{center}
      \caption{Convergence of the approximate second derivative
operator $\check D = U D^{(fd)} W$ (crosses) and the approximate identity
operator $\check I = U W$ (circles) with a test function of the form
$\psi(x)=\sin(kx)$ ($k$ and $b$ are the same as in Figure~\ref{fig:coeff_as})}
\label{fig:app_der}
\end{figure}  

It is important to note that all matrices in (\ref{eq:d2trans})
can be built for any arbitrary spatial grid, though it will not be an
approximate oscillator representation. But we can modify this grid
only in the asymptotic region ({\em e.g.} to implement the absorbing boundary
with ECS). Then the coefficients $c_n$ have
no physical meaning as oscillator coefficients but the coordinate wave
function can sill be reconstructed with (\ref{eq:w_matrix}).

\section{High-order hybrid representation for scattering problems}
The aim is now to build a hybrid representation for scattering
calculations where the oscillator basis is used in the internal region
and finite differences in the outer region.  The matching of the two
should use the higher--order asymptotic formula \eqref{eq:first_corr}.
However, the strategy displayed on Fig.~\ref{fig:coupling} cannot be
easily applied since the third derivative of the wave function needs
to be estimated from several neighboring points symmetrically
distributed on both sides of the matching point.  But in the matching
point we only have the required function values on one side of the
matching point.  This arguments holds both for the oscillator and for
the finite difference representation.  We therefore present a matching
strategy that differs from the proposal of \cite{PhysRevC.82.064603}.

Consider a arbitrary grid $\bar{R}_n$ in $[0,\infty]$ with
$n=1,2,\ldots$ that differs from the grid of turning points $R_n$.  On
this grid we can construct the differential operator $D^{(3)}$ and
construct the operators $\bar{U}$ and $\bar{W}$ by replacing every
appearance of $R_n$ in equations \eqref{eq:first_corr_fd} and
\eqref{eq:first_corr_inv} by $\bar{R}_n$. Similarly, for a function
$\psi$ sampled in the grid points $\bar{R}_n$ we can calculate
\begin{equation}
  \bar{c}_n  = \sum_{k} \bar{U}_{nk} \psi(\bar{R}_k).
\end{equation} 
These coefficients $\bar{c}_n$ are not the expansion coefficients of
the function $\psi$ in the oscillator function basis.  Only when
$\bar{R}_n$ equals $R_n$, the oscillator turning points, then the
$\bar{c}_n$ are approximations to the oscillator expansion
coefficients $c_n$.

To build the hybrid representation, we choose the grid $\bar{R}_n$
such that first $N+1$ points correspond to the turning points
$R_n$. The remaining points of $\bar{R}_n$, for $n > N+1$, are
chosen to correspond to points on a equidistant finite difference
grid.  To solve a scattering problem the equation
\eqref{eq:scattering} needs to be discretized with finite differences
on the grid $\bar{R}_n$ and then transformed with the help of
$\bar{U}$ and $\bar{W}$ into
\begin{equation}
\sum_j \left(  \bar{U}H^{fd}\bar{W}-E \bar{U}\bar{W}\right)_{ij} \bar{c}_j = (\bar{U}f)_i
\end{equation}
to arrive at an equation for $\bar{c}_i$.

The first $N$ coefficients of $\bar{c}_n$ correspond now to
approximations to oscillator expansion coefficients $c_n$.  However,
because $n$ is low, they are only a poor approximation. The idea is
now to replace the first $N$ coefficients with the exact coefficients.
At the same time we replace the first $N$ rows of the matrix with the
exact operator in the oscillator representation.  The linear system
then becomes
{\footnotesize
\begin{equation}
 \left(\begin{array}{ccc|cc}
 H^{(osc)}_{00} - E &    \hdots&    H^{(osc)}_{0N}  &H^{(osc)}_{0N+1} &\hdots\\ 
 H^{(osc)}_{10} &   \hdots  & H^{(osc)}_{1N} & H^{(osc)}_{1N+1} & \hdots \\ 
\vdots        &             &      \vdots       & \vdots   &\\
 H^{(osc)}_{N0} &   \hdots  & H^{(osc)}_{NN}-E & H^{(osc)}_{NN+1} & \hdots \\
\hline
[\bar{U}H^{(fd)}\bar{W}]_{N+1,0} & \hdots & [\bar{U}H^{(fd)}\bar{W}]_{N+1,N} & [\bar{U}(H^{(fd)}-E)\bar{W}]_{N+1,N+1} & \hdots\\
\vdots                           &&\vdots&\vdots&
\end{array} \right)  \left(\begin{array}{c} \vphantom{H^{(N)}_N} c_{0}\\
\vphantom{H^{(N)}_N} c_{1}\\
\vdots\\
\vphantom{H^{(N)}_N} c_{N}\\
 \hline 
\vphantom{H^{(N)}_N} \bar{c}_{N+1}\\
\vdots 
\end{array} \right) = \left(\begin{array}{c} f^{(osc)}_{0}\\ 
f^{(osc)}_{1}\\ 
\vdots\\
f^{(osc)}_{N}\\ 
\hline
\sum_j\bar{U}_{N+1j} f(x_j)\\
\vdots
\end{array} \right),
\label{eq:combined}
\end{equation} 
}
where $H^{(osc)}$ is the representation of the Hamiltonian in the
oscillator representation and $H^{(fd)}$ in the finite difference
representation.

We emphasize the difference with \cite{PhysRevC.82.064603}. Here we do
not match two regions by using the asymptotic formula in one
point. Now the representation in the asymptotic region is a fairly
good approximation of the oscillator representation. Therefore the
structure of the discretized wave function is simpler than
\eqref{eq:hybridvector}. Now we have
$$
\bar{\sf \Psi} = \left(c_0, \ldots c_N, \bar{c}_{N+1}, \ldots \bar{c}_{K}\right),
$$ 
where in the initial version of hybrid method, the latter elements of
the vector are the function values of $\psi$ in the grid points. Now 
the internal region is covered by an exact oscillator representation,
and the asymptotic region is covered by approximate representation
which is based on finite differences and includes the ECS
transformation.

\subsection{Numerical Illustration} 
We illustrate the method for a one-dimensional radial example. After the
solution of (\ref{eq:combined}), we first reconstruct the coordinate
wave function using (\ref{eq:w_matrix}). Outside the range of the
potential $V$ the solution can be written as a linear combination
$\psi_{sc} = A \hat{h}_l^+(kx) + B \hat{h}_l^-(kx)$, where
$\hat{h}_l^\pm$ are the in- and outgoing Riccati-Hankel functions. The
coefficient $A$ is then extracted with
\begin{equation} A =
W\left(\psi_\text{sc}(x),\hat{h}_l^-(kx)\right)/W\left(\hat{h}_l^+(kx),\hat{h}_l^-(kx)\right),
\end{equation} 
where $x$ is outside the range of the potential but
still on the real part of the ECS domain. The Wronskian is calculated
as $W(u,v) = u^\prime v-v^\prime u$, where the derivatives can be
implemented with finite differences.  From the Wronskians for $A$ and
$B$ we can extract the phase shift of the solution.

It is important to note that the accuracy of the method does not
directly depend on the size of the asymptotic region as long as this
region is located outside the range of the potential ($V=0$ in the
considered part of $\check H^{(osc)}$). In this region we use a grid
of 150 equidistant points and then the ECS layer that spans 10
dimensionless length units and applies a complex rotation of
$45^{\circ}$ to the coordinate axis.

We first consider a model problem with a attractive potential in
Gaussian form, $V(x) = -\exp(-x^2)$, and we limit our consideration to
the case of zero total angular momentum $l=0$. Nevertheless, all the
conclusions are applicable to higher angular momenta as well.  The
right panel of Figure \ref{fig:pshift} shows the error in the
scattering phase shift of the considered problem calculated with the
original JM-ECS and the new approach. To obtain the reference phase
shift we use the highly accurate variable phase approach (VPA). We see
that for all energies the new method gives much more accurate and less
oscillatory results. The convergence as a function of the number of
oscillator states in the inner region of both methods is shown on
Figure \ref{fig:pshift_conv}. We see that we arrived at the desired
the convergence rate to $N^{-1}$ in both low-energy and high-energy
regions.

 \begin{figure}
  \includegraphics[width=0.41\linewidth]{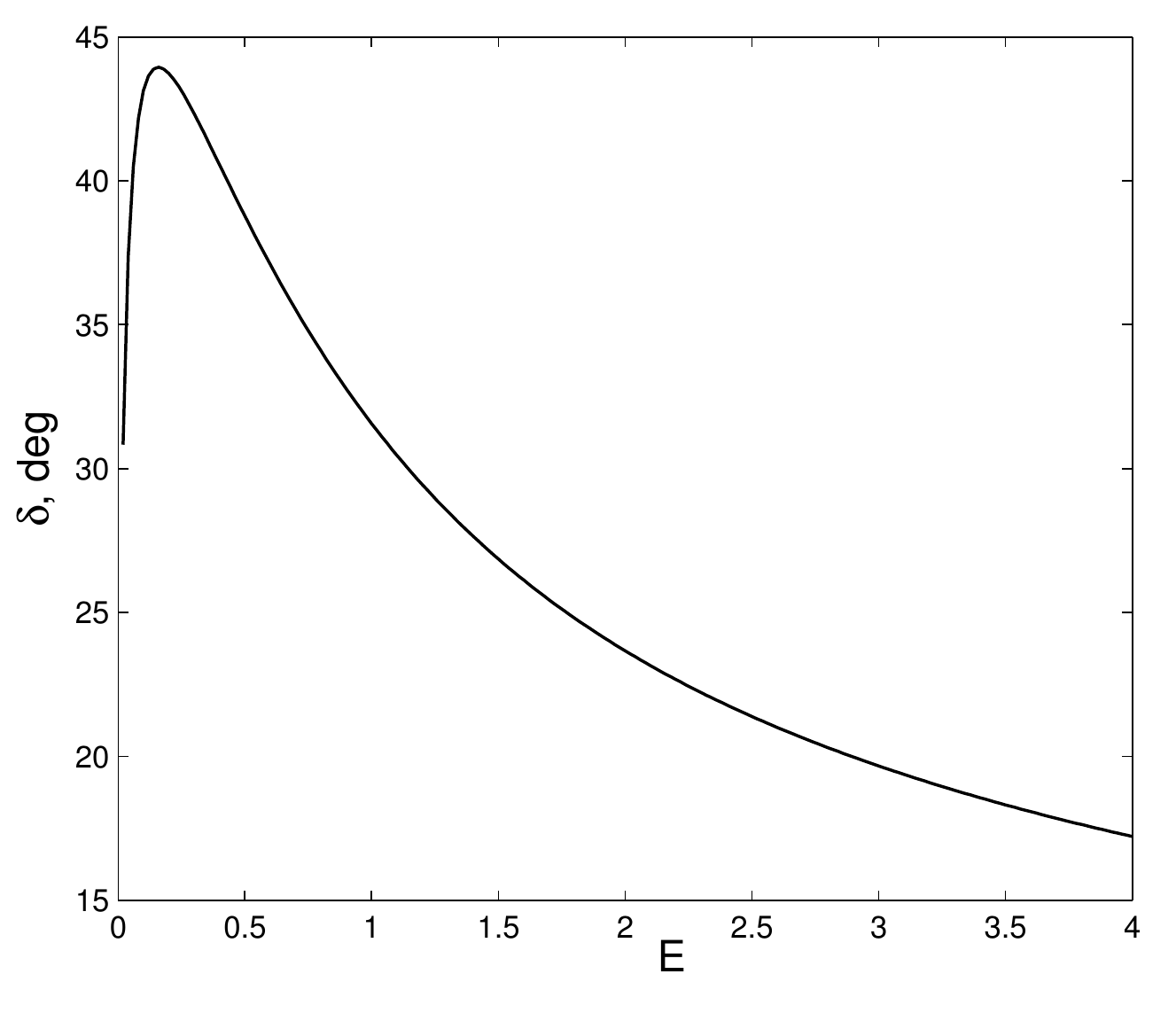}
  \includegraphics[width=0.59\linewidth]{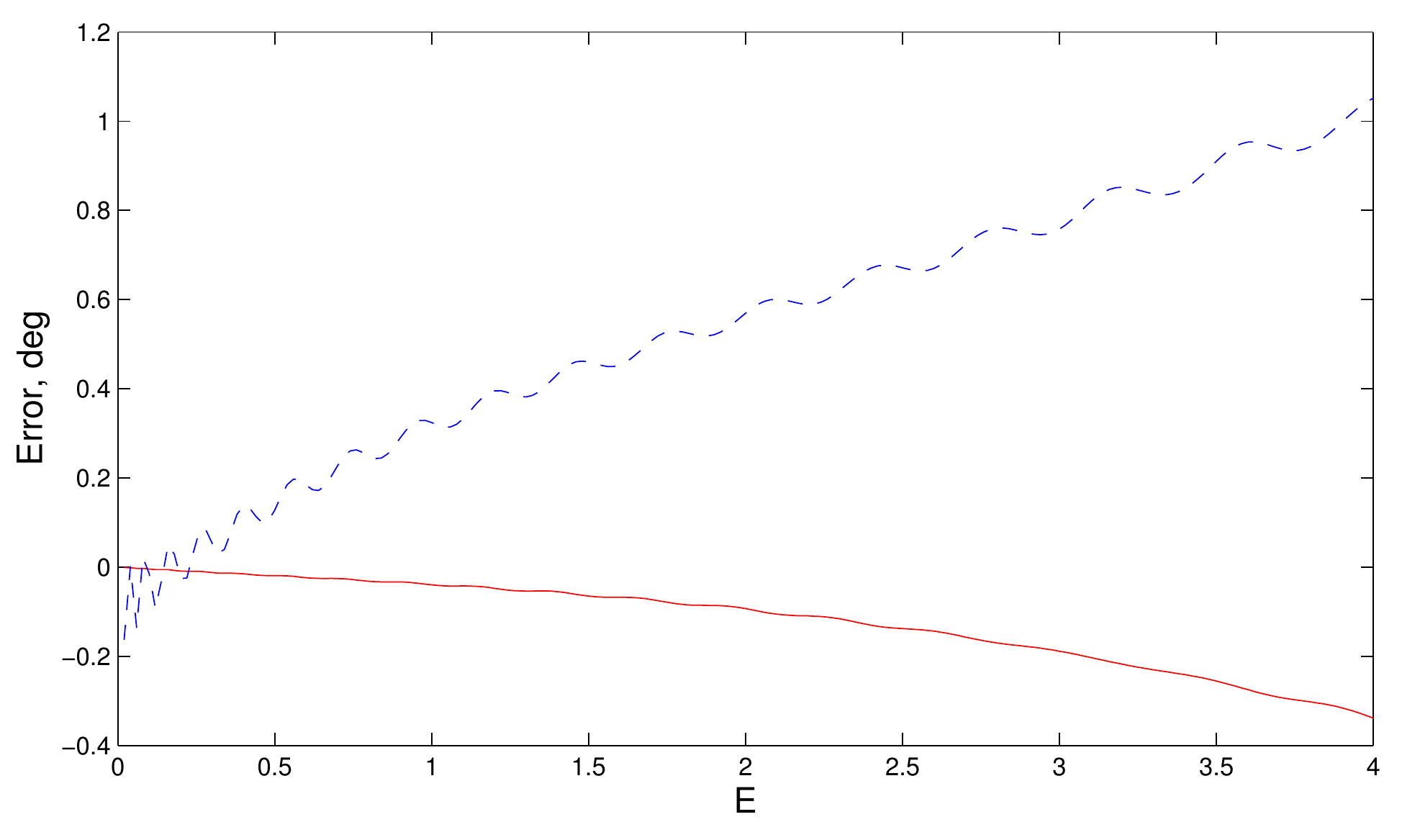}
  \caption{ Scattering phase shift of the model problem of Section
4 with a Gaussian potential (left) and the absolute error
of the scattering phase shift calculated with JM-ECS (dashed line) and
our new approach (solid line) depending on the energy of the system
(right). Both calculations were made with 100 functions in the
oscillator basis with the oscillator length $b=0.7$.}
\label{fig:pshift}
 \end{figure}

 \begin{figure}
  \includegraphics[width=0.5\linewidth]{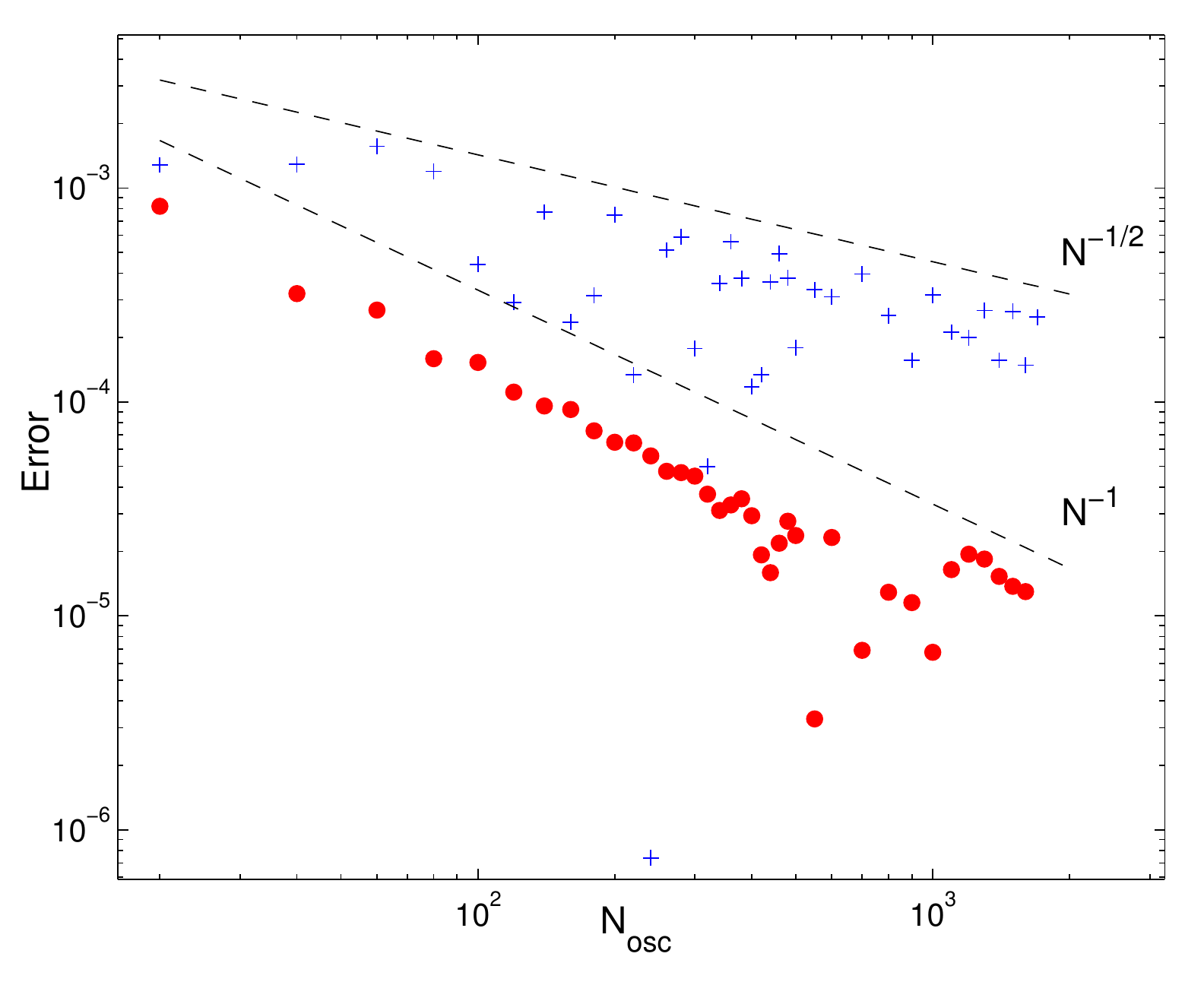}
  \includegraphics[width=0.5\linewidth]{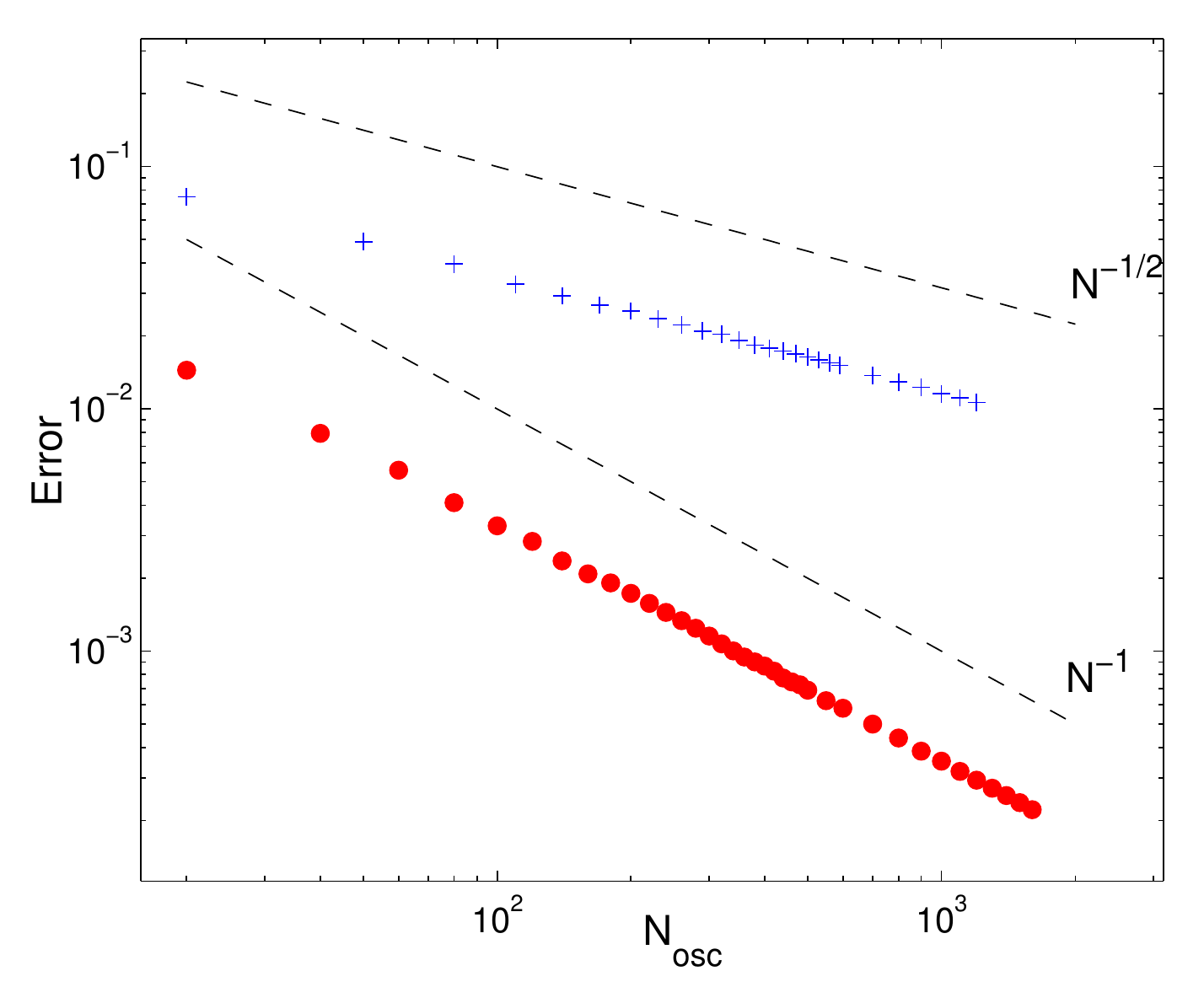}
\caption{Accuracy of the scattering phase shift calculated with JM-ECS
(crosses) and the new method (circles) depending on the size of the
oscillator basis. Calculations were made for two values of energy:
$E=0.2$, where $\delta=43.73 ^{\circ}$ (left) and $E=3$, where
$\delta=19.67 ^{\circ}$ (right)}
\label{fig:pshift_conv}
 \end{figure}

 The Gaussian shape of the interaction potential was chosen as the
 main model problem as it is particularly well adapted to the harmonic
 oscillator basis and $J$-matrix method for this potential converges for
 any oscillator length $b$. However we can also test our approach
 against other short-range potentials. But if the asymptotic behavior
 of the potential is not Gaussian then the results will converge only
 for specific values of the oscillator length, that match the range of
 the potential. Also, if the potential has a special point in the
 origin (like Yukawa potential) then the convergence will be much
 slower due to the importance of the Fourier contribution in the
 potential matrix elements. For additional convergence tests we have
 chosen two potentials with non Gaussian asymptotics and different
 behavior in the origin: Morse potential $V(x) = \exp(-2 |x/b|)-\exp(-
 |x/b|)$ and Yukawa potential $V(x) = -\exp(-|x/b|)/x$. In both
 potentials the potential range is chosen to match the oscillator
 length of the basis. Figure \ref{fig:pshift_conv_yukava} shows the
 convergence of the phase shift for these two potentials. We see that
 in high-energy region the convergence pattern is mostly similar for
 all potentials as the kinetic energy is much higher than the
 potential energy in this case. The behavior of the error is much less
 clear for low energies. The convergence rate appears to be the same
 here for both methods with Morse potential, for Yukawa potential the
 convergence rate is blurred due to high oscillations which clearly
 indicates the importance of Fourier terms. Though we can generally
 conclude that the new approach always gives better and generally less
 oscillatory result.

 \begin{figure}
  \includegraphics[width=0.5\linewidth]{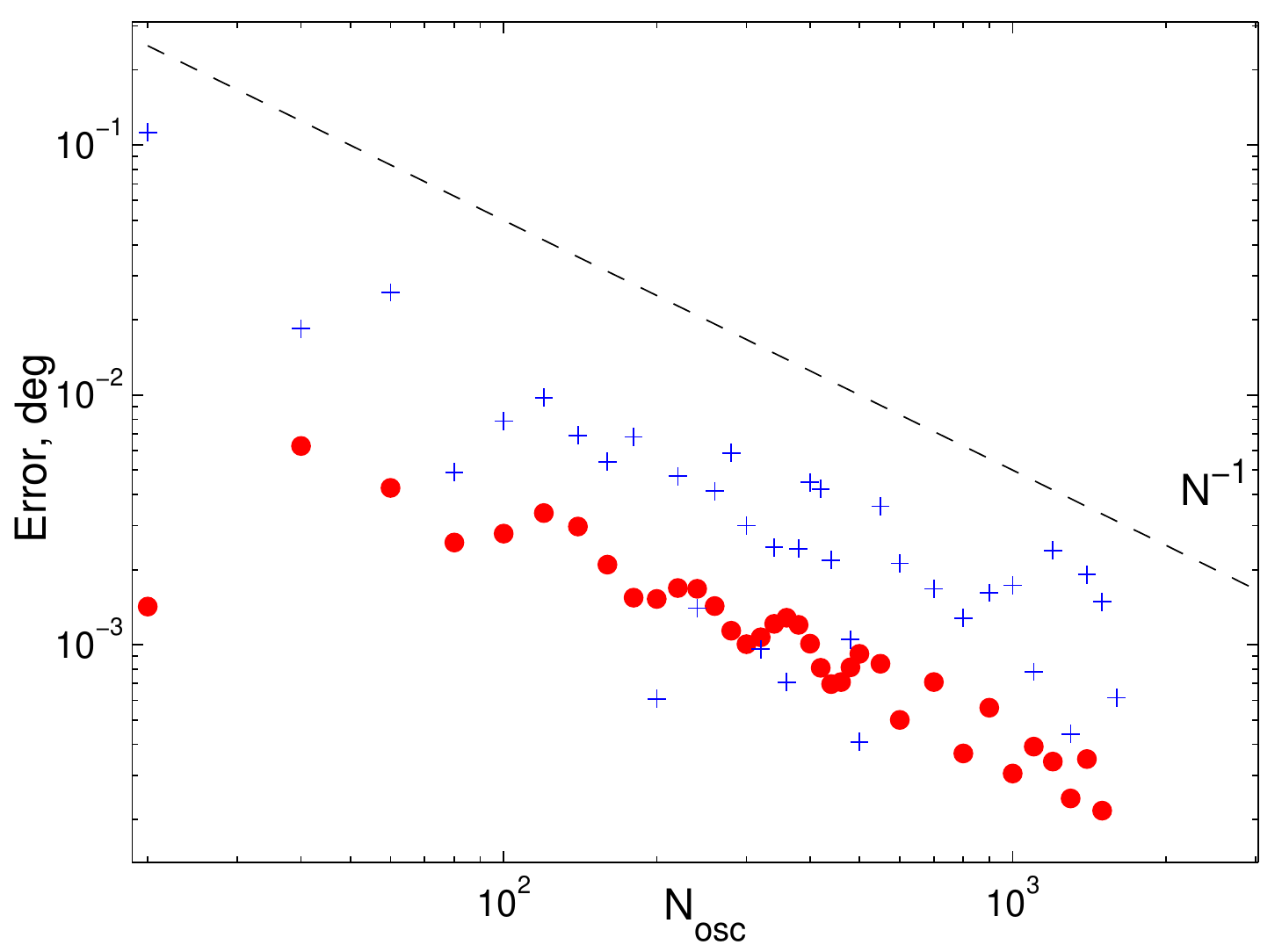}
  \includegraphics[width=0.5\linewidth]{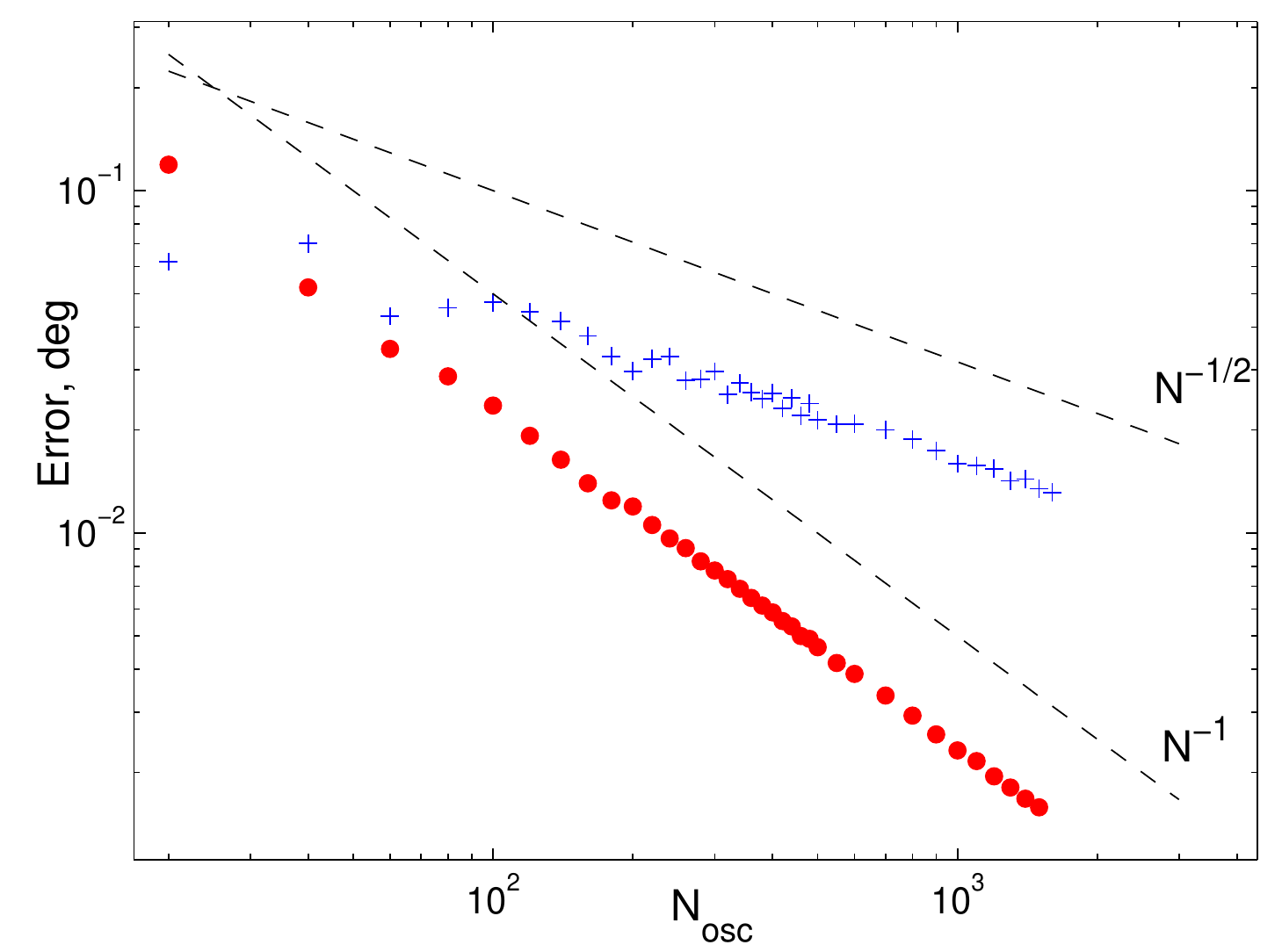}
  \includegraphics[width=0.5\linewidth]{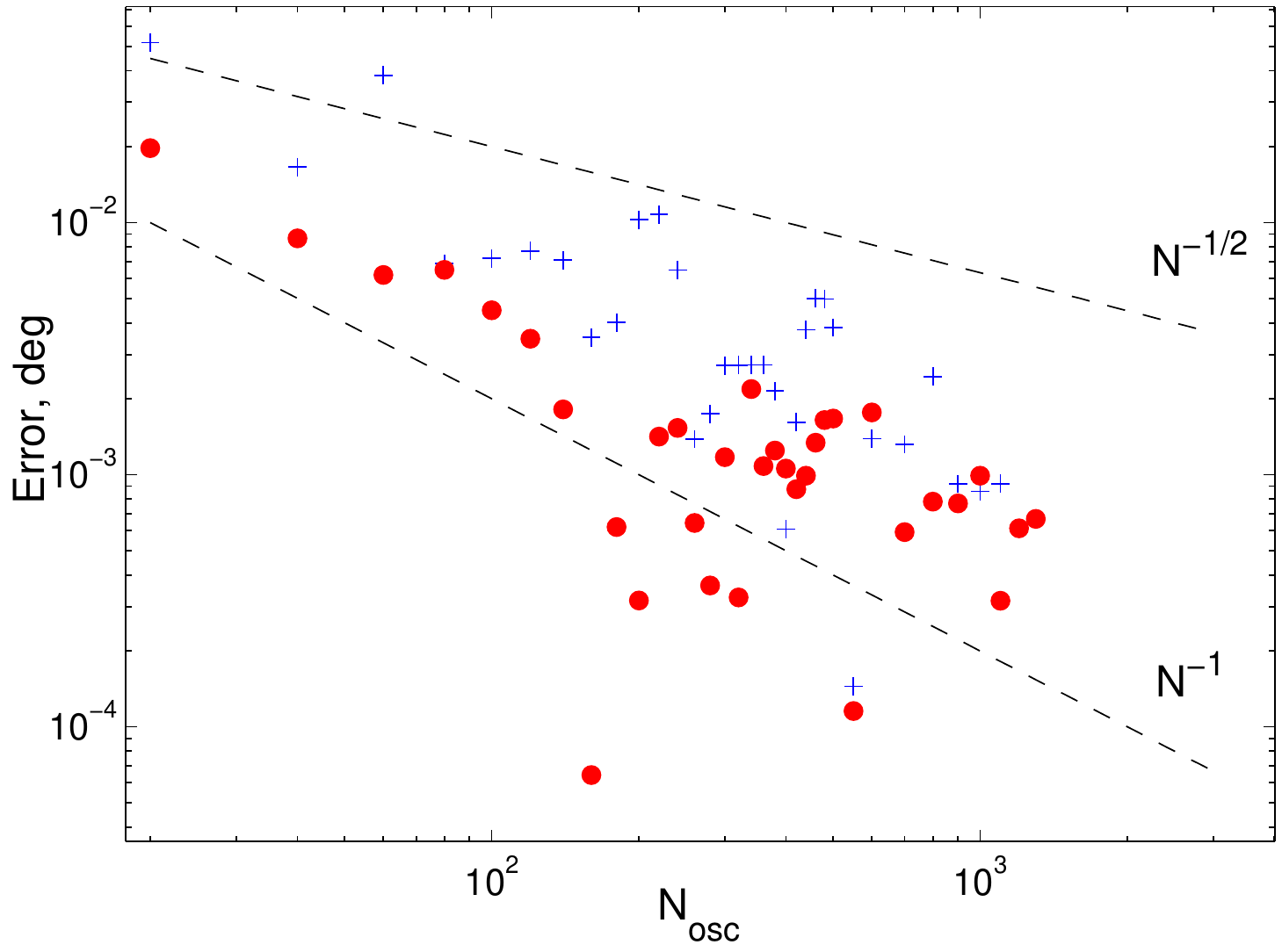}
  \includegraphics[width=0.5\linewidth]{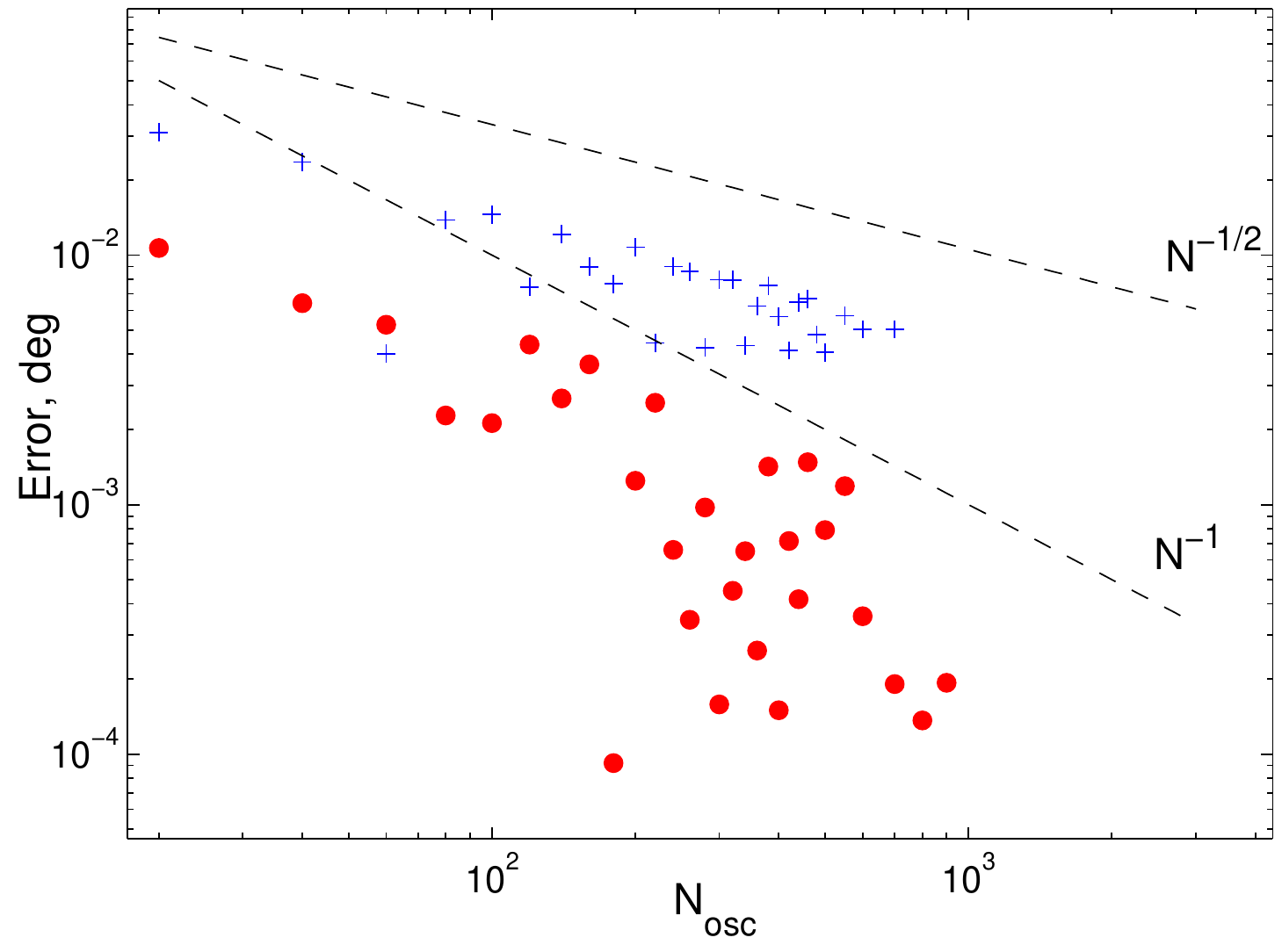}
\caption{Same as Figure \ref{fig:pshift_conv} but for the Morse potential (two upper panels) and Yukawa potential (two lower panels)}
\label{fig:pshift_conv_yukava}
 \end{figure}

\section{Extension to systems with more particles} 
The results of the previous sections can be generalized to systems of
three and more particles.  Let $\mathbf{r}_1$ and $\mathbf{r}_2$ be
two relative coordinates describing a three--body problem.  The
6D wave function can be expanded in partial waves
\begin{equation} \label{eq:2dexpansion}
\Psi(\mathbf{r}_1,\mathbf{r}_2)= \Psi(\rho_1,\rho_2,\theta_1,\theta_2,\phi_1,\phi_2) = \sum_{l_1,m_1,l_2,m_2} \frac{\psi_{l_1,l_2,m_1,m_2}(\rho_1,\rho_2)}{\rho_1 \rho_2}Y_{l_o1,m_1}(\theta_1, \phi_1)Y_{l_2,m_2}(\theta_2, \phi_2).
\end{equation}
Such an expansion is used for example to describe double ionization processes in atomic physics \cite{vanroose2006double}

Let $\psi(x,y)$ now be an infinitely
differentiable  two--dimensional radial scattering wave function that is expanded
in the bi-oscillator basis as
\begin{equation}
  \psi(x,y) = \sum_{n=0}^{\infty} \sum_{m=0}^{\infty}  c_{nm} \varphi_n(x) \varphi_m(y),
\label{eq:bioscillator}
\end{equation} 
where $x$ and $y$ should be interpreted as two radial coordinates.
The expansion coefficient is then calculated as a double integral that 
can be approximated by applying the asymptotic relation twice. First in the $x$-direction and then in the $y$-direction
\begin{equation}
\begin{aligned} \label{eq:2derrors} c_{nm}=& \int_0^\infty
\int_0^\infty \varphi_n(x)\varphi_m(y) \psi(x,y) \,dx dy \\
=&\int_0^\infty \varphi_n(x)\left[ \sqrt{2} \,
R_m^{-1/2}\psi(x,R_m) + (b^4/6) R_m^{-3/2} \partial_{yyy} \psi(x,R_m)
+ \mathcal{O}(R_m^{-5/2})\right]dx \\
 =& 2\, (R_m
R_n)^{-1/2}\psi(R_n,R_m) + \frac{b^4 \sqrt{2}}{6} R_n^{-3/2}
R_m^{-1/2} \partial_{xxx} \psi(R_n,R_m) + \frac{b^4 \sqrt{2}}{6}
R_n^{-1/2}R_m^{-3/2}\psi_{yyy}(R_n,R_m) \\ & +
\frac{b^8}{36}R_n^{-3/2}R_m^{-3/2}\partial_{xxx,yyy} \psi(R_n,R_m) +
\mathcal{O}(R_n^{-1/2}R_m^{-5/2}) + \mathcal{O}(R_m^{-1/2}R_n^{-5/2}).
\end{aligned}
\end{equation}
It is important to note that the accuracy depends on both indices $n$
and $m$.  For example, when the low order approximation is used for
both integrals, there will be error terms of the order
$R_m^{-1/2}R_n^{3/2}$ and $R_n^{-1/2}R_m^{3/2}$.  Along the diagonal,
i.e.  when $R_m = R_n$, it is accurate up to order
$R_n^{-1/2}R_m^{-3/2} = R_n^{-2} = \mathcal{O}(n^{-1})$, which is of
the order $h_n^2$, with $h_n$ the distance between the classical
turning points.  For the higher--order
approximation the error term along the diagonal, where $R_n=R_m$, will
be $R_n^{-3} = \mathcal{O}(n^{-3/2})$. In the left panel of Figure
\ref{fig:coeff_2d} we compare the exact expansion coefficients with
the first order and second order approximation along the diagonal for
$n=m$.  The results show an improvement with the higher--order
expression over the first order approximation.

However, when one of the indices remains small, for example $m$, then
the error becomes asymptotically $\mathcal{O}(R_n^{-1/2})$, as
expected from \eqref{eq:2derrors}.  As $n$ becomes large, all error
terms with a higher--order accuracy decay until the term $O(R_n^{-1/2}
R_m^{-5/2})$ dominates for the higher--order formula, or $O(R_n^{-1/2}
R_m^{-3/2})$ for the low order accuracy. This is shown in the right
panel of Figure \ref{fig:coeff_2d}.
\begin{figure}
  \begin{center}
	\includegraphics[width=0.45\linewidth]{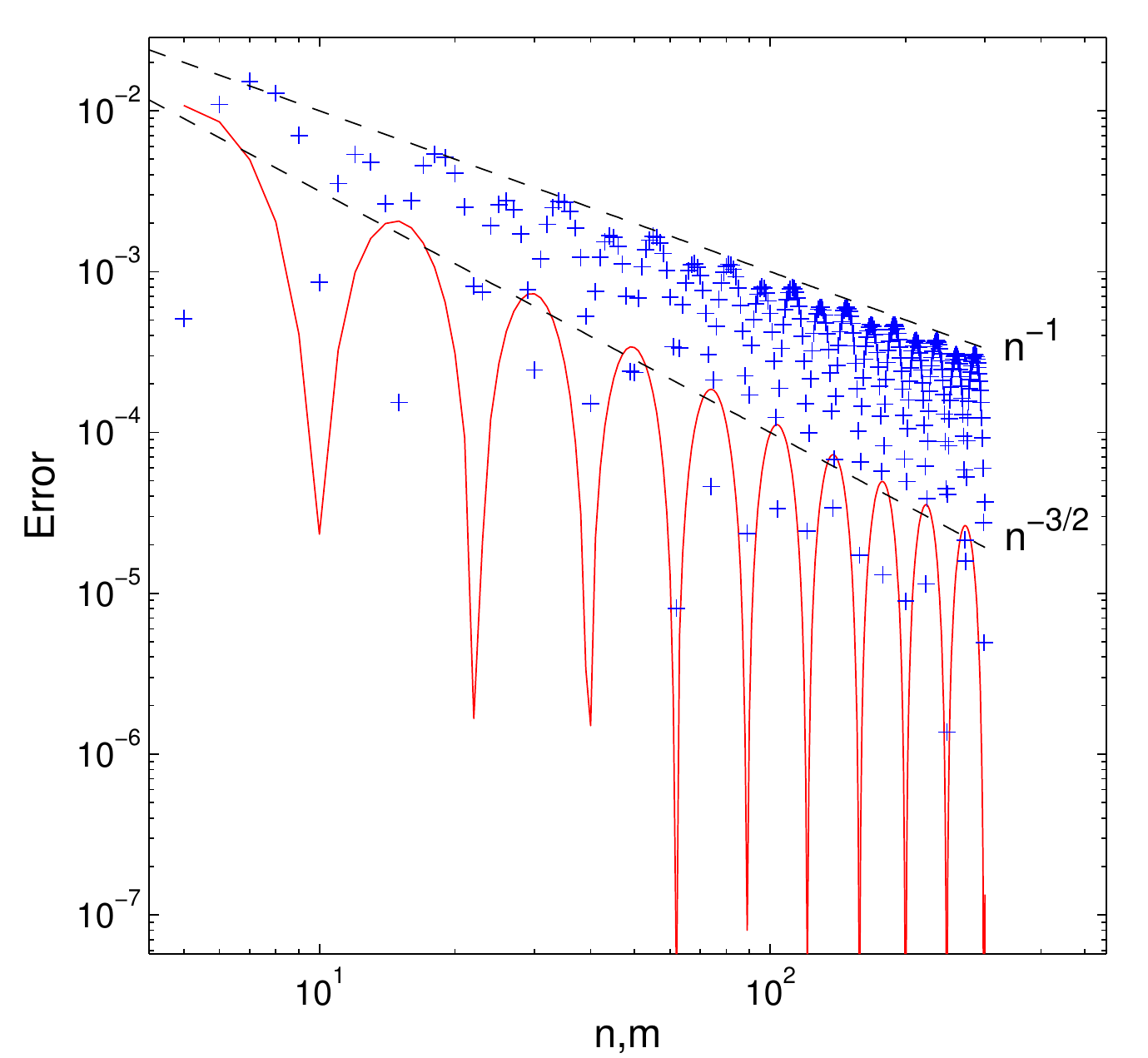}
\includegraphics[width=0.45\linewidth]{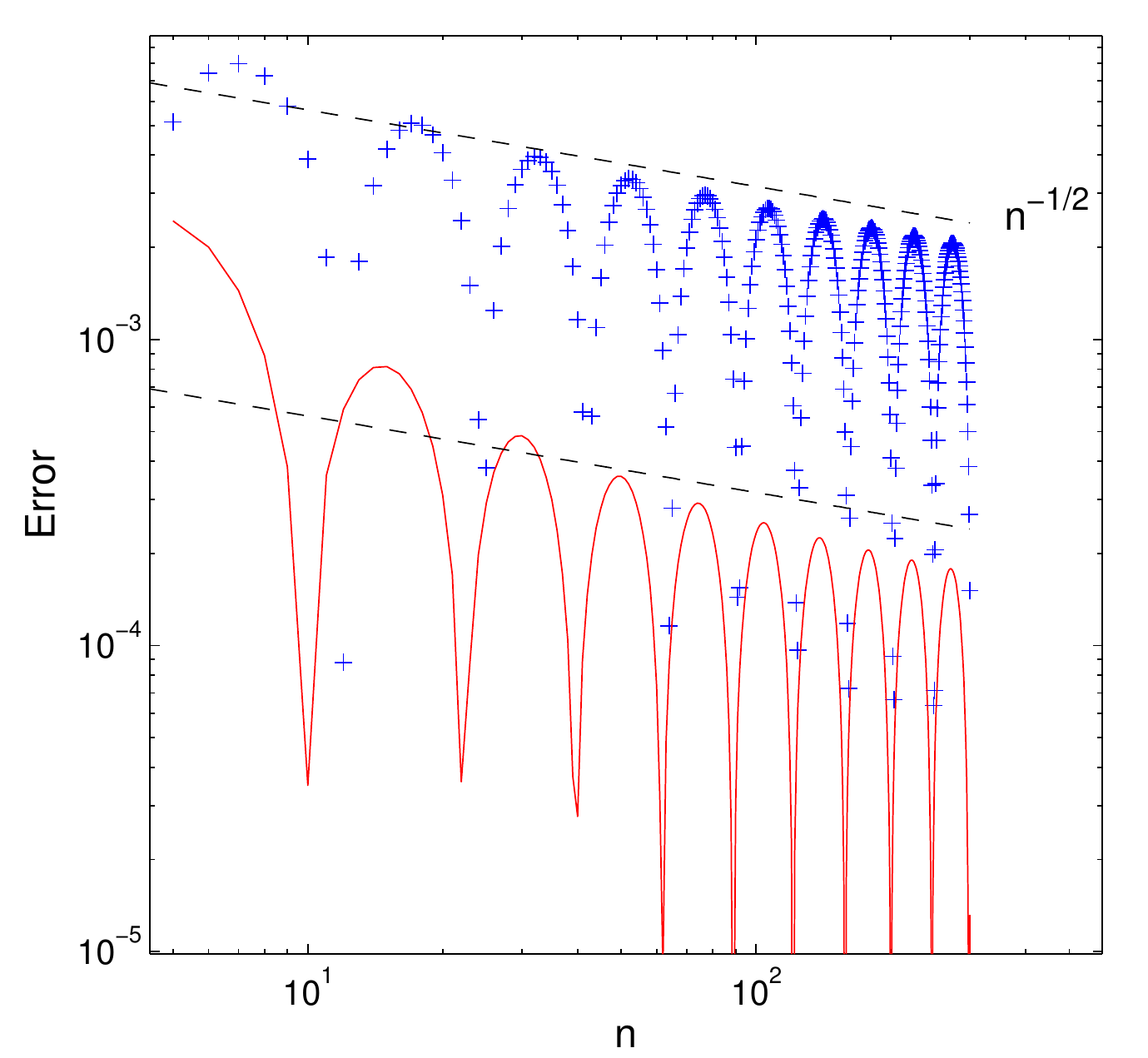}
      \end{center}
      \caption{Convergence of the asymptotic formula for
two-dimensional radial functions. The error is calculated by comparing the
exact expansion coefficient with the asymptotic approximations. The
function $f(x,y) = \sin(kx) \sin(ky)$ along the diagonal (n=m, left)
and along the line with fixed m=20 (right).  Crosses show the
first order and solid lines correspond to the second order asymptotic
expansion coefficient ($k=1$, $b=1$). }
\label{fig:coeff_2d}
\end{figure}

\subsection{Three--body scattering problems} 
In a similar way as in the previous sections we can use the asymptotic
formulate to build a hybrid representation that can solve the radial
equation of two radial variables that arises after an expansion of a full
three--body scattering problem equation in spherical harmonics \eqref{eq:2dexpansion}. This typically leads to a set of coupled two--dimensional radial 
partial waves. A diagonal block of this coupled set reads
\begin{equation}
  \left(-\frac{1}{2}\frac{d^2}{dx^2} -
    \frac{1}{2}\frac{d^2}{dy^2} + \frac{l_1(l_1+1)}{2x^2} +
    \frac{l_2(l_2+1)}{2y^2} + V(x,y) - E\right) \psi(x,y) = \chi(x,y)
\label{eq:2d_scattering}
\end{equation}
with boundary conditions  $\psi(x,0)=\psi(0,y)=0$.

The solution of this equation can be represented in a bi-oscillator
basis, (\ref{eq:bioscillator}), using two indices $n$ and $m$. Then
similar to the two--body case we use approximate oscillator
coefficients in the asymptotic region. But there are now three
asymptotic regions: first, the region where $n$ is large, second, the
region where $m$ is large and finally, the region where both $n$ and
$m$ are large.  For each of these regions we define approximate
oscillator coefficients that take the form:
\begin{align}
 c_{nm} \approx \grave c_{nm} := & \sum_{i} U_{ni} \int_0^{\infty} \varphi_m(y)
  \psi(R_i,y) \,dy + \mathcal{O}(R_n^{-5/2})  \quad \text{where} \quad n \gg 1\\
 c_{nm} \approx \acute c_{nm} := & \sum_{i} U_{mi} \int_0^{\infty} \varphi_n(x)
  \psi(x,R_i) \,dx + \mathcal{O}(R_m^{-5/2}) \quad \text{where} \quad m \gg 1\\
c_{nm} \approx  \check c_{nm} := & \sum_{i,j} U_{ni} U_{mj} \psi(R_i,R_j) +
  \mathcal{O}(R_n^{-1/2}R_m^{-5/2}) +
  \mathcal{O}(R_m^{-1/2}R_n^{-5/2}) \quad \text{where} \quad n,m \gg 1,
\end{align} where the error terms in the last expression are explained
in (\ref{eq:2derrors}).

Then the coefficient matrix $c_{nm}$ of hybrid representation is divided into four blocks
corresponding to different regions
\begin{equation} 
  c_{nm} = \left(\begin{array}{lll|lll} c_{0\,0} &
      \ldots &c_{0\,M-1} & \acute c_{0\,M} & \ldots &\acute c_{0\,{K'}}\\
      c_{1\,0} & \ldots &c_{1\,M-1} & \acute c_{1\,M} & \ldots &\acute
      c_{1\,{K'}}\\ 
\vdots & & \vdots& \vdots & &\vdots\\
c_{N-1\,0} & \ldots &c_{N-1\,M-1} & \acute c_{N-1\,M} &
      \ldots & \acute c_{N-1\,K'} \\ \hline \grave c_{N\,0} & \ldots &\grave
c_{N\,M-1} & \check
      c_{N\,M} & \ldots &\check c_{N\,K'}\\ \grave c_{N+1\,0} & \ldots &\grave
      c_{N+1\,M-1} & \check c_{N+1\,M} & \ldots &\check c_{N+1\,K'}\\ 
\vdots & &\vdots & \vdots & &\vdots\\
 \grave c_{K\,0} & \ldots &\grave c_{K\,M-1} &
\check
      c_{K\,M} & \ldots &\check c_{K\,K'}\\
\end{array}\right),
\end{equation} 
where we define $N$ and $M$ as sizes of the oscillator
bases in each direction and $K$ and $K'$ are the total number of
variables in each direction.

To build a linear system corresponding to (\ref{eq:2d_scattering}), we
reshape the matrix $c_{nm}$ to a vector. Then the Hamiltonian matrix of a 2D
problem is constructed as a Kronecker sum of the two-body Hamiltonian's
constructed as in (\ref{eq:combined}) and a two-body potential matrix. In the
case of our approximate oscillator representation this takes the form
\begin{equation}
 \check H^{(osc)}_{2D} = \check H^{(osc)}_{1} \otimes (U^y W^y) + (U^x W^x) \otimes
\check H^{(osc)}_{2} + (U^x \otimes U^y) V_{12}^{(fd)} (W^x \otimes W^y)
\end{equation}
so the Kronecker sum contains the approximated unity operator instead
of the real one. Where $U^x$ and $W^x$ are the transformation matrices
for the $x$ coordinate and similarly for the $y$ coordinate.  The
total size of the Hamiltonian matrix is $K \times K^\prime$.

\subsection{Numerical results}
We first evaluate the method with the help of a model Helmholtz
problem with a constant wave number or energy $E=k^2/2$ and for
$l_1=0$ and $l_2=0$.  The equation is then
\begin{equation} \left(-\frac{1}{2} \Delta - E\right) \psi(x,y) =
\varphi_0(x)\varphi_0(y), 
\label{eq:simple_2d_problem}
\end{equation} with boundary conditions $\psi(x,0)=0$ and
$\psi(0,y)=0$.  
The form of the right--hand side here was chosen for simplicity as it results
in a vector $(1,0,0,\dots 0)$ in bi-oscillator representation.
This problem is exactly solvable with the help of the
Greens function
\begin{align} G(x,y;x',y') = & \frac{i}{2} \left(
H_{0}^{(1)}(k\sqrt{(x-x')^2+(y-y')^2}) -
H_{0}^{(1)}(k\sqrt{(x+x')^2+(y-y')^2}) \right. \\
-&\left. H_{0}^{(1)}(k\sqrt{(x-x')^2+(y+y')^2}) +
H_{0}^{(1)}(k\sqrt{(x+x')^2+(y+y')^2}) \right), 
\end{align} where $H_0^{(1)}$ is a 0-th order Hankel function of the
first kind.  The scattering solution is then
\begin{equation} \psi(x,y) = \int_0^\infty \!\!\int_0^\infty \varphi_0(x')\varphi_0(y')
G(x,y;x',y')\; x'y'\; dx'\; dy'
\label{eq:green_integral}
\end{equation} 

For simplicity we are not going to compare any scattering information extracted
from the wave function as it involves additional operations like surface
integration, which can lead to additional loss of accuracy and need to be
studied separately. We will compare the values of the wave function in a fixed
spatial point. As the spatial grid is different for every size of the
oscillator basis we can not ensure that any fixed spatial point lies exactly on
the grid point at every calculation, so we need to interpolate. We have used
a cubic spline interpolation, and from comparison to the other interpolation
methods we can expect that the additional error introduced by this operation is
negligible compared to the errors of the method (of course, that is only if
the function is smooth enough, that means not very high values of $E$).

As we do not have any potential in this model problem, the convergence of the
proposed hybrid oscillator representation will depend mainly on the accuracy of
the asymptotic relations used. This means that we can choose the size of the
spatial domain on which we solve the two--dimensional radial problem as small as possible but not
smaller than the region spanned by the exact oscillator basis. As the maximal
size of the oscillator basis we use in this calculations is $350$ and the
oscillator length was chosen as $b=0.7$ ($R_N \approx 26$) the size of the
spatial domain was chosen as $30$ dimensionless length units of the real
grid and additional $10$ units of the ECS layer. The value of the wave
function was extracted at the point $(28,28)$ when we increase the basis size
simultaneously and at $(28,12)$ when we keep the basis size fixed in the
$y$-dimension.

In Figure \ref{fig:conv_wf_2d} we compare the numerical solution of
the linear system with the one obtained from~\eqref{eq:green_integral}
which we consider as exact. We see that, similar to
Figure~\ref{fig:coeff_2d}, if we expand the basis only in one
direction the convergence is of the order $N^{-1/2}$.
\begin{figure}
  \begin{center}
	\includegraphics[width=0.45\linewidth]{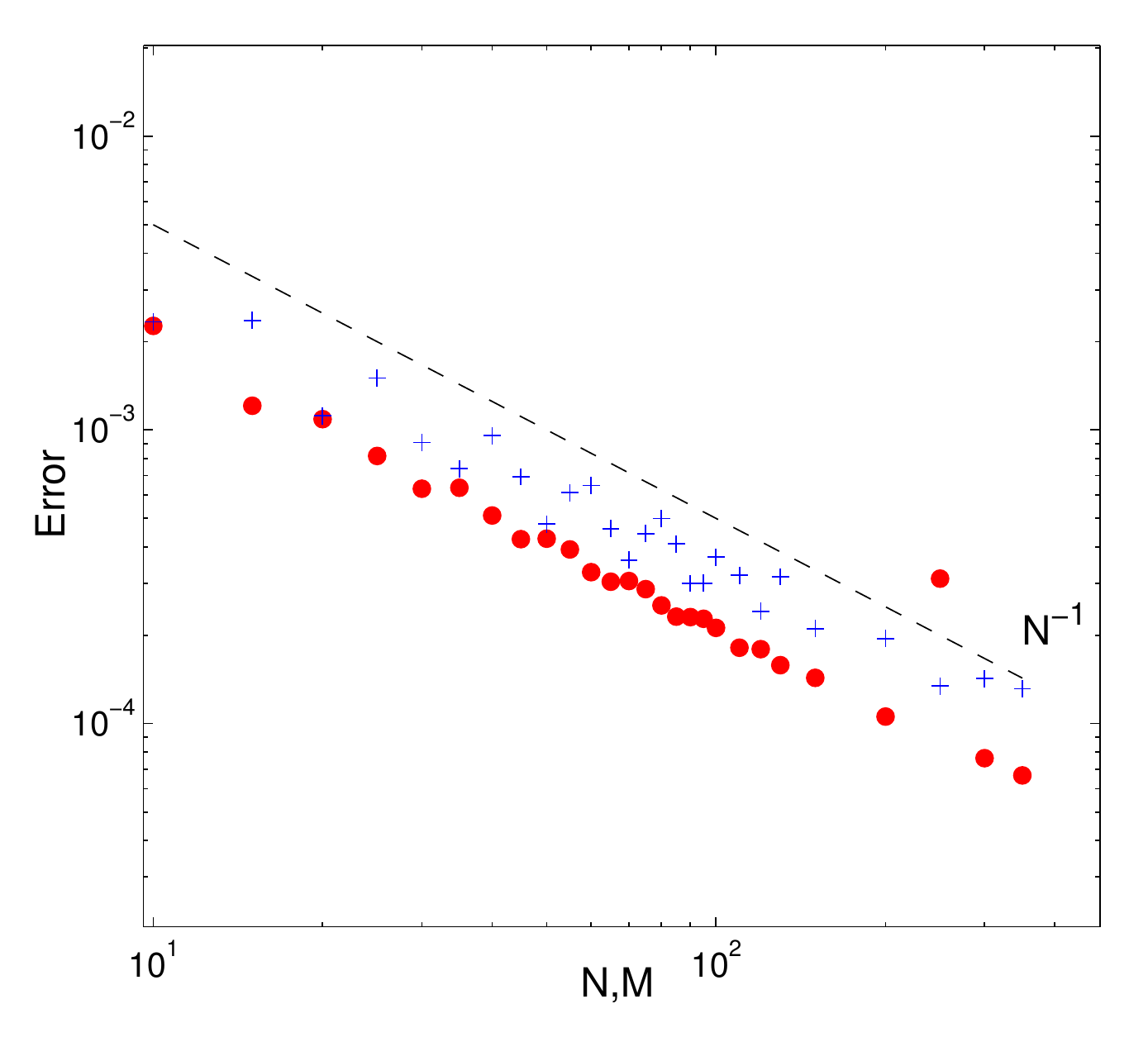}
\includegraphics[width=0.45\linewidth]{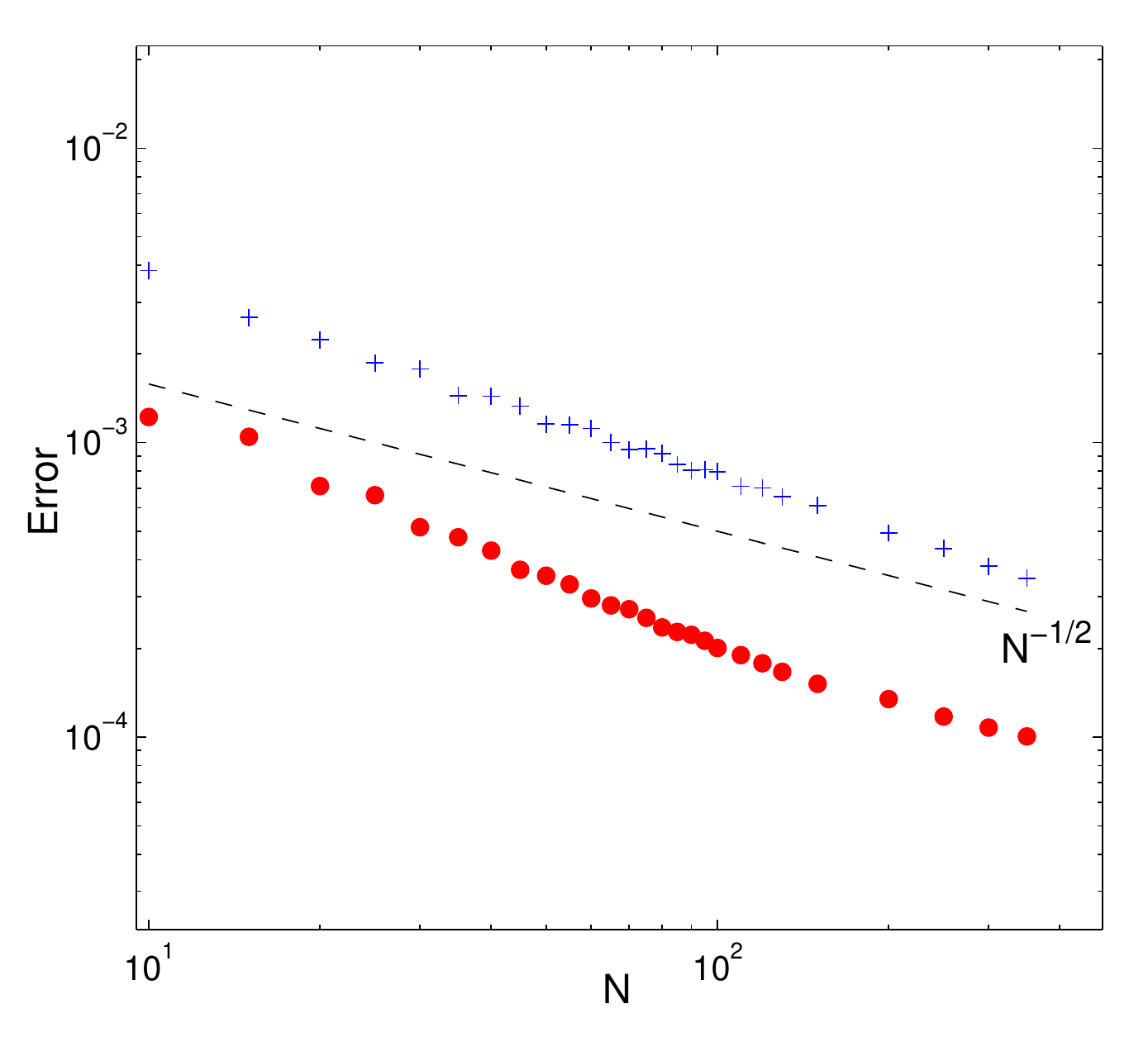}
      \end{center}
      \caption{Convergence of the two-dimensional scattering wave
function for the problem in Eq.~\eqref{eq:simple_2d_problem} at the fixed
spatial point. Left figure represents the case
where the basis was increased in two dimensions simultaneously ($N =
M$), right figure shows the convergence with a fixed $M = 50$. Both
calculations are made with $b=0.7$ and $E=2$.}
\label{fig:conv_wf_2d}
\end{figure}

Next we test the method on a model potential scattering problem
described by (\ref{eq:2d_scattering}) with zero partial angular momenta
and the Gaussian interaction potential in the form
\begin{equation}
 V(x,y) = -3\mathrm{e}^{-x^2} - 3\mathrm{e}^{-y^2} + 10
\mathrm{e}^{-x^2-y^2}
\label{eq:2d_pot}
\end{equation}
The one-body potential, here $V_1(x) = -3\exp(-x^2)$, can support one bound
state. This means that using the potential \eqref{eq:2d_pot} we can model a
three-body breakup problem using the right--hand side in the form
\begin{equation}
 \chi(x,y) = -(V(x,y)-V_1(x))f_0(x)\hat j_0(y),
\end{equation}
where $f_0(x)$ is the wave function of the bound state, $\hat j_0(y)$ is the
Riccati-Bessel function and together they represent the initial state of the
system.
For the considered problem there are no analytical results to compare with, so
on
the Figure
\ref{fig:surf_2d} we present the solution of the linear system
for $c_{nm}$ and the spatial wave function, reconstructed from
$c_{nm}$ by applying the transformation matrix $W \otimes W$. On both figures
we can clearly recognize the patterns of elastic scattering (the plane waves
along the axes) and the breakup (the radial waves with lower frequency as part
of the energy was spent to break the bound state). The in-depth analysis of the
three--body scattering results is the subject of future work.
\begin{figure}
  \begin{center}
	\includegraphics[width=0.49\linewidth]{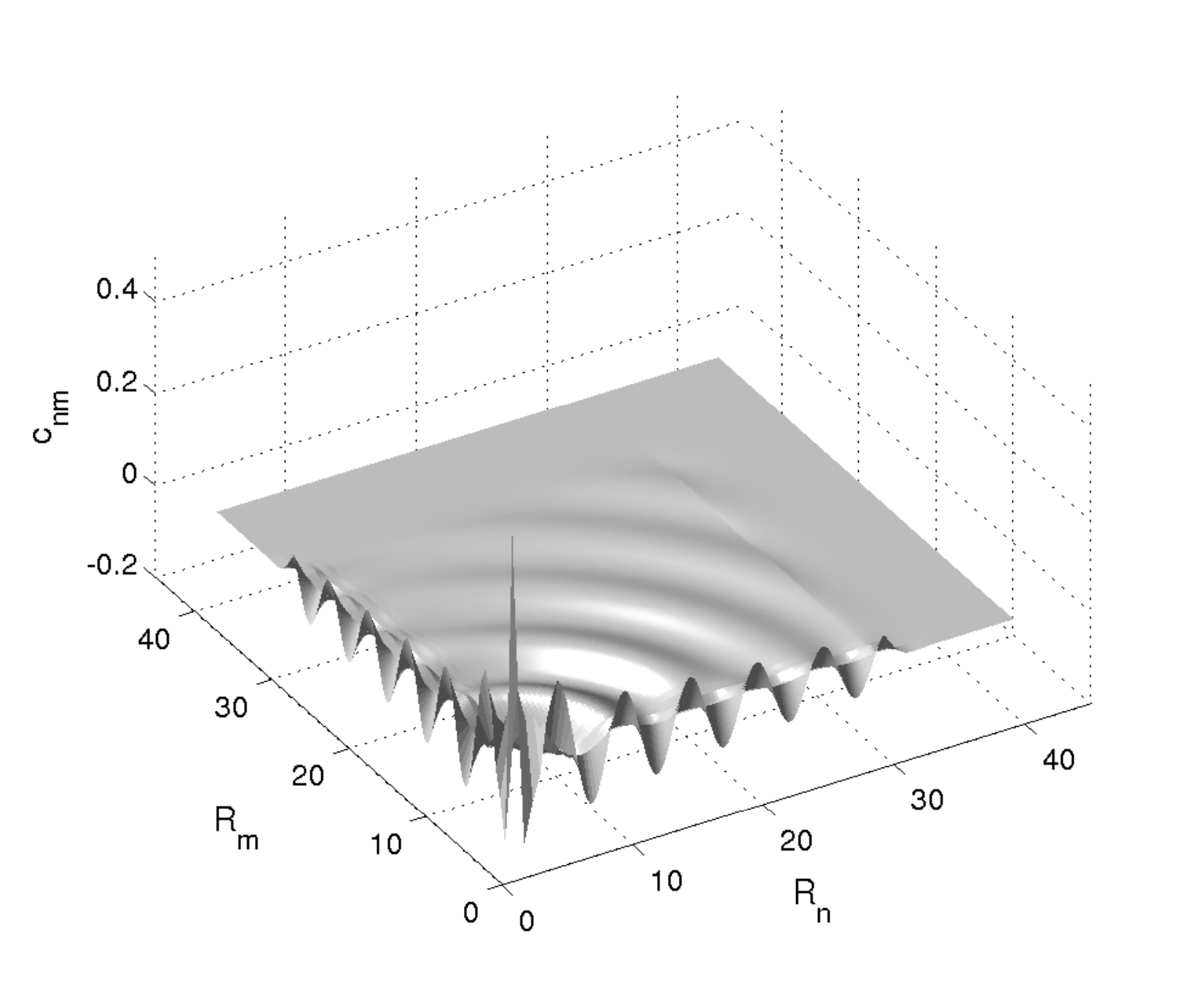}
\includegraphics[width=0.49\linewidth]{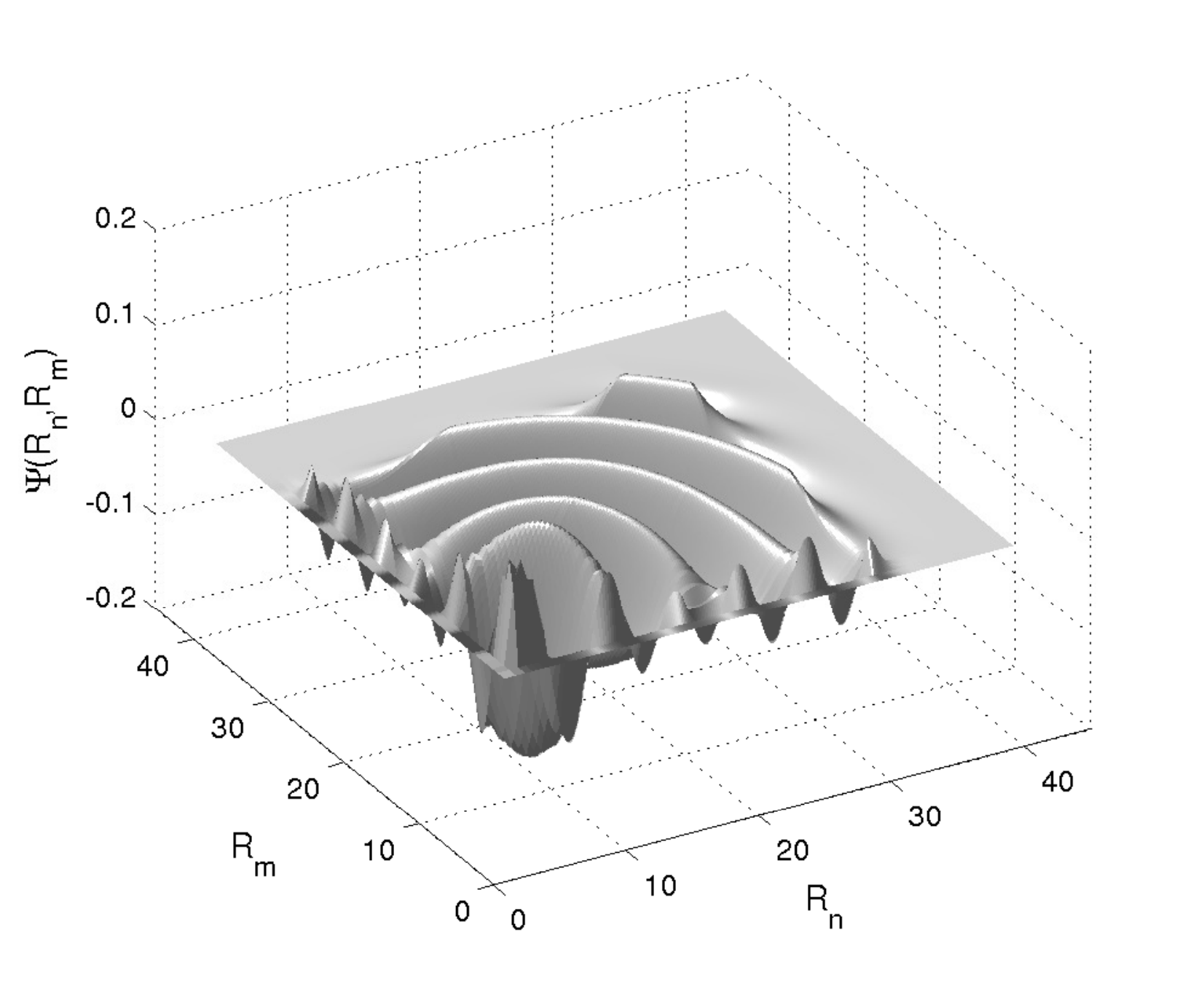}
      \end{center}
      \caption{The solution of three--body scattering problem in oscillator representation
(left figure) and the reconstructed spatial wave function (right
figure). The calculation is made with $b=1$ and $E=1$. The problem has Dirichlet boundaries on all sides. }
\label{fig:surf_2d}
\end{figure}

\section{Conclusions and outlook} This paper focuses on scattering
calculations in the oscillator representation, where the solution is
expanded in the eigenstates of the harmonic oscillator.  The
oscillator representation is not the most natural representation to
describe scattering processes since it involves a basis set that is
designed to describe bound states. This leads to a rather low
convergence and may result in a linear system with very large dense
matrix. It is often more natural to describe a scattering process with
the help of a grid representation.  These grid--based calculations
have been very successful in describing scattering and breakup
processes in atomic and molecular physics. The Helmholtz equation is
also efficiently solved on these grids.

However, internal structure of the nuclear clusters and other
many--particle systems are efficiently described by such an oscillator
basis, since the bottom of the potential can be mimicked by the
oscillator potential leading to an efficient description of the
internal structure.

In this paper we combine the advantages of grid--based calculations
with those of the oscillator representation.  The method was
originally proposed in \cite{PhysRevC.82.064603}, as a further
development of the $J$-matrix or algebraic method for
scattering. There the method combined the grid and oscillator
representation with a low--order asymptotic formula. In this paper we
have improved this matching with a higher--order approximation and
this brings the overall error down to the level of the discretization
error of the finite--difference grid.

Although a similar asymptotic formula appeared earlier in the work of
S. Igashov \cite{igashov_jmatrix}, we believe that the asymptotic
formula presented in the current paper is more generally
applicable. Furthermore, we have used the asymptotic formulas to
improve the accuracy and convergence of the hybrid simulation method.
The convergence of the method is illustrated with various examples
from two--body and three--body scattering.

In preparation of this papers, our initial efforts involved application of the
strategy showed in Figure \ref{fig:coupling} with the higher--order
formula. However, this required the use of a forward and backward
stencils to estimate the third derivative in the first grid point of
the finite difference representation.  This strategy, however, gives
rise to eigen modes with a negative energy localized near the
interface. This destroys the positive definiteness of the Laplace
operator. These modes are avoided in the proposed method that uses a
symmetric stencil around the matching point. This gives satisfactory
convergence.

We have applied the asymptotic technique to radial oscillator state
that are based on Laguerre polynomials.  Similar results can be
derived for the 1D oscillator states that are based on the Hermite
polynomials.  These Hermite polynomials are closely related to
Gaussian Type Orbitals (GTO) that are frequently used in computational
quantum chemistry \cite{mcmurchie1978one} .  It is worth to explore if
the asymptotic formulas make it possible to combine GTO's with grid
based calculations to describe molecular scattering processes.

In the future further research is necessary on asymptotic expressions
of products of two functions.  This will involve convolution
integrals. Better expressions for products will also help to take into
account asymptotic behavior of the potentials in the grid representation.

\section{Acknowledgments}
We acknowledge fruitful discussions with F. Arickx and J. Broeckhove
and we are grateful to B.  Reps and P. K{\l}osiewicz for reading the
manuscript.  We also thank the anonymous referees for their
suggestions.  We are thankful for support from FWO--Flanders with
project number G.0120.08.  

\bibliographystyle{elsarticle-num-names}
\bibliography{Refs}

\end{document}